\documentclass[twocolumn,journal]{IEEEtran}
\IEEEoverridecommandlockouts

\usepackage[applemac]{inputenc}
\usepackage{graphics} 
\usepackage{epsfig} 
\usepackage{epstopdf}
\usepackage{algorithm}
\usepackage[noend]{algpseudocode}
\usepackage{pgfplots} 

\usepackage{tikz}
\usetikzlibrary{automata,positioning,shapes,arrows}

\usepackage{amsmath} 
\usepackage{amssymb}  
\usepackage{amsthm}
\usepackage{epstopdf}
\usepackage{graphicx,xcolor}
\usepackage{caption}
\usepackage{subcaption}
\usepackage{courier}
\usepackage{mathrsfs}
\usepackage{tikz}
\usetikzlibrary{calc,patterns,angles,quotes}
\usepackage{todonotes}

\makeatletter
\def\BState{\State\hskip-\ALG@thistlm}
\makeatother

\newtheorem{thm}{Theorem}
\newtheorem{defi}{Definition}

\newtheorem{cor}{Corollary}

\newtheorem{prop}{Proposition}
\newtheorem{problem}{Problem}
\newtheorem{example}{Example}

\title{\LARGE \bf
Control Theory Meets POMDPs: \\A Hybrid Systems Approach
}

\author{Mohamadreza Ahmadi, Nils Jansen, Bo Wu, and Ufuk Topcu
\thanks{This work was supported by AFOSR FA9550-19-1-0005, DARPA D19AP00004, NSF 1646522 and NSF 1652113.}
\thanks{M. Ahmadi is with the Center for Autonomous Systems and Technologies at the California Institute of Technology, 1200 E. Calif. Blvd., MC 104-44, Pasadena, CA 91125.  e-mail: (\{mrahmadi\}@caltech.edu). B. Wu and U. Topcu are with the Oden Institute for Computational Engineering and Sciences, University of Texas, Austin, 201 E 24th St, Austin, TX 78712, USA.  e-mail: (\{ bwu3,  utopcu\}@utexas.edu). N. Jansen is with the  Department of Software Science (SWS) at the Radboud University, Institute for Computing and Information Sciences, PO Box 9010, 6500 GL Nijmegen, Netherlands. e-mail: (\{n.jansen\}@science.ru.nl).
}
}

\begin{document}

\maketitle
\thispagestyle{empty}
\pagestyle{empty}

\begin{abstract}
Partially observable Markov decision processes (POMDPs)  provide a modeling framework for a variety of sequential decision making under uncertainty scenarios in artificial intelligence (AI). Since the states are not directly observable in a POMDP, decision making has to be performed based on the output of a Bayesian filter (continuous beliefs). Hence, POMDPs are often computationally intractable to solve exactly and researchers resort to approximate methods often using discretizations of the continuous belief space. These approximate solutions are, however, prone to discretization errors, which has made POMDPs ineffective in applications, wherein guarantees for safety, optimality, or performance are required. To overcome the complexity challenge of POMDPs, we apply notions from control theory. 
The goal is to determine the reachable belief space of a POMDP, that is, the set of all possible evolutions given an initial belief distribution over the states and a set of actions and observations. 
We begin by casting the problem of analyzing a POMDP into analyzing the behavior of a discrete-time switched system.
For estimating the reachable belief space, we find over-approximations in terms of sub-level sets of Lyapunov functions. 
Furthermore, in order to verify safety and optimality requirements of a given POMDP, we formulate a barrier certificate theorem, wherein we show that if there exists a barrier certificate satisfying a set of inequalities along with the belief update equation of the POMDP, the safety and optimality properties are guaranteed to hold.  In both cases, we show how the calculations can be decomposed into smaller problems that can be solved in parallel. The conditions we formulate can be computationally implemented as a set of sum-of-squares programs.  We illustrate the applicability of our method by addressing two problems in active ad scheduling and machine teaching.

\end{abstract}

\section{INTRODUCTION}


A  formal model for planning subject to stochastic behavior is a Markov decision process (MDP)~\cite{Put94}, where an agent chooses to perform an action under full knowledge of the environment it is operating in.
The outcome of an action is a probability distribution over the system states.
Many applications, however, allow only \emph{partial observability} of the current system state, say via noisy sensor outputs~\cite{kaelbling1998planning,thrun2005probabilistic,WongpiromsarnF12}.
Partially observable Markov decision processes (POMDPs) extend MDPs to account for such partial information~\cite{Russell-AI-Modern}.
Upon certain \emph{observations}, the agent infers the likelihood of the system being in a certain state, called the belief state. 
The belief state together with an update function form a (typically uncountably infinite) MDP, referred to as the \emph{belief MDP}~\cite{ShaniPK13,MadaniHC99,braziunas2003pomdp}.
However, solving a POMDP exactly, i.e., synthesizing a policy ensuring optimality or a certain performance, is carried out by assessing the entire belief MDP, rendering the problem undecidable~\cite{MadaniHC99,ChatterjeeCT16}. Several promising approximate point-based methods via finite abstraction of the belief space are proposed in the literature~\cite{Hauskrecht2000,wu2018privacy,Spaan2005,6284837}. At the heart of these methods is the observation that, given an initial belief $b_0$ (the initial probability of the system being in some state), the set of possible evolutions emanating from $b_0$ under arbitrary action and observation sequences,  called the \textit{reachable belief space} is much smaller than the whole belief space~\cite{hsu2007accelerating}. Hence, only a finite number of \textit{samples} or \textit{points} from the reachable belief space can be considered for decision making.

Nonetheless, computing the reachable belief space is non-trivial and fundamental in analyzing the convergence of approximate POMDP solutions. 
In fact, Pineu~\textit{et. al.}~\cite{pineau2006anytime} showed that point-based value iteration can approximate the solution of the POMDP sufficiently close, whenever  the total reward function is a discounted sum of the rewards over time, the samples are sufficiently close to each other, and if the sampling is carried out over all of the reachable belief space. 
In another vein, Lee \textit{et. al.}~\cite{lee2008makes} demonstrated that an approximate POMDP solution can be computed in time polynomial in the covering number (the number of balls covering a set) of the reachable belief space.  Yet, the problem lies in the fact that the reachable belief space is not known \textit{a priori} and therefore it is hard to make judgements on the convergence rate of a given approximate method. Several techniques have been proposed for estimating the reachable belief space mainly based on heuristics~\cite{kurniawati2008sarsop,spaan2005perseus}, "smarter" sampling methods, such as importance sampling~\cite{luo2016importance}, and graph search~\cite{agha2014robust,prentice2010belief}. However, these methods are difficult to adapt from one problem setting to another and are often \textit{ad hoc}.

Even with a reachable belief space at hand, the approximate techniques do not provide a guarantee for safety or optimality. That is, it is not clear whether the probability of satisfying a pre-specified safety/optimality requirement is an upper-bound or a lower-bound for a given POMDP. Establishing guaranteed safety and optimality is of fundamental importance in safety-critical applications, e.g. aircraft collision avoidance~\cite{Kochenderfer2013} and Mars rovers~\cite{smith2004heuristic}. 

In this paper, we use notions from control theory to analyze a POMDP without the need to solve it explicitly. In this sense, our method follows the footsteps of  Lyapunov's second method for verifying the stability of all solutions of an ODE without the need to solve it or the application of barrier certificate theorems for safety verification without the need to solve an ODE for all initial conditions in a given set. Our main contributions are fourfold:
\begin{itemize}
\item We propose a switched system representation for the belief evolutions of POMDPs;
\item We present a method based on Lyapunov functions to  over-approximate the reachable belief space, thus allowing us to verify reachability objectives. These  over-approximations are in terms of sub-level sets of Lyapunov functions. We demonstrate how these calculations can be decomposed in terms of the POMDP actions;
\item We propose a method based on barrier certificates to verify optimality and safety of a given POMDP. We show how the calculations of the barrier certificates can be decomposed in terms of actions;
\item We formulate a set of sum-of-squares programs for finding the Lyapunov functions and barrier certificates to verify safety and optimality and  over-approximate the reachable belief space, respectively. 
\end{itemize}
We illustrate the efficacy of our methodology with two examples from cyber-physical systems and artificial intelligence. 

Preliminary results related to the formulation of the barrier certificates for POMDPs applied to privacy verification and robust decision making were disseminated in~\cite{AWLT18,ACJT18}.

The rest of the paper is organized as follows. In the next section, we briefly describe the POMDP formalism. In Section III, we present the problems studied in this paper. In Section~IV, we formulate a hybrid system representation for belief MDP. In Section~V, we show how we can use Lyapunov functions to approximate the reachable belief space. In Section~VI, we propose barrier certificate theorem for safety/optimality verification of POMDPs. In Section VII, we propose a set of sum-of-squares programs for computational implementation of the proposed results. In Section VIII, we elucidate the proposed methodology with examples. Finally, in Section IX, we conclude the paper and give directions for future research.

\textbf{Notation:}
The notations employed in this paper are relatively straightforward. $\mathbb{R}_{\ge 0}$ denotes the set $[0,\infty)$.  $\mathbb{Z}$ denotes the set of integers and $\mathbb{Z}_{\ge c}$ for $c \in \mathbb{Z}$ implies the set $\{c,c+1,c+2,\ldots\}$. $\mathcal{R}[x]$ accounts for the set of polynomial functions with real coefficients in $x \in \mathbb{R}^n$, $p: \mathbb{R}^n \to \mathbb{R}$ and $\Sigma \subset\mathcal{R}$ is the subset of polynomials with a sum-of-squares decomposition; i.e, $p \in  \Sigma[x]$ if and only if there are $p_i \in \mathcal{R}[x],~i \in \{1, \ldots ,k\}$ such that $p = p_i^2 + \cdots +p_k^2$.  For a finite set $A$, $|A|$ denotes the number of elements in $A$ and $co\{A\}$ denotes the convex hull of $A$. In particular, for a family of indexed functions $f_a: \mathcal{X} \to \mathcal{Y}$, $a \in A$, $co\{f_a\}_{a\in A}$ denotes the convex combination of $f_a,~a \in A$, i.e., $co\{f_a\}_{a\in A} = \sum_{a\in A} \beta_a f_a$ where $0 \le \beta_a \le 1$ and $\sum_{a\in A} \beta_a=1$.

\section{Partially Observable Markov Decision Processes}








A POMDP is a sequential  decision-making modeling framework, which considers not only the stochastic outcomes of actions, but also the imperfect state observations~\cite{Sondik78}. 

\begin{defi}
A POMDP $\mathcal{P}$ is a tuple $(Q,p_0,A,T,Z,O)$, where

	\begin{itemize}
				\item $Q$ is a finite set of states with indices $\{1,2,\ldots,n\}$;
		\item $p_0:Q\rightarrow[0,1]$ defines the distribution of the initial states, i.e., $p_0(q)$ denotes the probability of starting at $q\in Q$;
		\item $A$ is a finite set of actions;
		\item $T:Q\times A\times Q\rightarrow [0,1]$ is the transition probability, where 
		\begin{multline}
		T(q,a,q'):=P(q_t=q'|q_{t-1}=q,a_{t-1}=a),~\\
		\forall t\in\mathbb{Z}_{\ge 1}, q,q'\in Q, a\in A. \nonumber
		\end{multline}		 
		\item $Z$ is the set of all possible observations. Often, $z\in Z$ is an incomplete projection of the world state $q$, contaminated by  noise;
		\item $O:Q\times A \times Z\rightarrow [0,1]$ is the observation probability (sensor model), where
		\begin{multline}
		O(q,a,z):=P(z_t=z|q_{t}=q,a_{t-1}=a),~\\
	    \forall t\in\mathbb{Z}_{\ge 1}, q\in Q, a\in A, z\in Z. \nonumber
		\end{multline}			 
	\end{itemize}
\end{defi}
	
	Since the states are not directly accessible in a POMDP, decision making requires the history of observations. Therefore, we need to define the notion of a \emph{belief} (or the posterior) as sufficient statistics for the history~\cite{astrom}. Given a POMDP, the belief at $t=0$ is defined as $b_0(q)=p_0(q)$ and $b_t(q)$ denotes the probability of the system being in state $q$ at time $t$. At time $t+1$, when action $a\in A$ is taken and $z \in Z$ is observed, the belief update can be obtained by a Bayesian filter as
\begin{align} \label{equation:belief update}
b_t(q')
&=P(q'|z_t,a_{t-1},b_{t-1}) \nonumber \\
&= \frac{P(z_t|q',a_{t-1},b_{t-1})P(q'|a_{t-1},b_{t-1})}{P(z_t|a_{t-1},b_{t-1})}\nonumber \\
&= \frac{P(z_t|q',a_{t-1},b_{t-1})}{P(z_t|a_{t-1},b_{t-1})} \nonumber \\
&    \times \sum_{q\in Q}P(q'|a_{t-1},b_{t-1},q)P(q|a_{t-1},b_{t-1}) \nonumber \\
&=\frac{O(q',a_{t-1},z_{t})\sum_{q\in Q}T(q,a_{t-1},q')b_{t-1}(q)}{\sum_{q'\in Q}O(q',a_{t-1},z_{t})\sum_{q\in Q}T(q,a_{t-1},q')b_{t-1}(q)},
\end{align}
where the beliefs belong to the belief unit simplex
$$
\mathcal{B} = \left\{ b \in [0,1]^{|Q|} \mid \sum_q b_t(q)=1, \forall t  \right\}.
$$

A policy in a POMDP setting is then a mapping $\pi:\mathcal{B} \to A$, i.e., a mapping from the continuous belief space into the discrete and finite action space.
 Essentially, a policy defined over the belief space has infinite memory when mapped to the potentially infinite executions of the POMDP.
 
\section{Problem Formulation}\label{sec:probform}

%




We consider reachability, safety, and optimality properties of finite POMDPs. In the following, we formally describe the latter problems.
  
  Given an initial belief $b_0$ and a POMDP $\mathcal{P}$, we are concerned with estimating  the set of all belief evolutions emanating from $b_0$, which is known as the  \emph{reachable belief space}.
\begin{problem}[Reachable Belief Space]
Given a POMDP $\mathcal{P}=(Q,b_0,A,T,Z,O)$ as in Definition 1, estimate 
\begin{equation}
\mathcal{R}(b_0)= \left\{ b \in \mathcal{B} \mid b_{t},~\text{for all $t>0$, $a \in A$, and $z \in Z$}\right\}.
\end{equation}
\end{problem}

We are also interested in studying the reachable belief space induced by a policy $\pi$.

\begin{problem}[Reachable Belief Space Induced by Policy $\pi$]
Given a POMDP $\mathcal{P}=(Q,b_0,A,T,Z,O)$ as in Definition~1  and a policy $\pi:\mathcal{B} \to A$, estimate 
\begin{equation}
\mathcal{R}^{\pi}(b_0)= \left\{ b \in \mathcal{B} \mid b_{t},~\text{for all $t>0$ and $z \in Z$}\right\}.
\end{equation}
\end{problem}

We define \emph{safety} as the probability of reaching a set of unsafe states $Q_u \subset Q$ being less than a given constant. 
To this end, we use the belief states. We are interested in solving the following problem.
\begin{problem}[Safety at $\tau $]
Given a POMDP as in Definition~1, a point  in time $\tau \in \mathbb{Z}_{\ge0}$, a set of unsafe states $Q_u \subseteq Q$, and a safety requirement constant $\lambda$, check whether
\begin{equation}\label{eq:safety}
\sum_{q \in Q_u}b_{\tau}(q) \le \lambda,
\end{equation}
\end{problem}
\smallskip
%

We also consider safety for all time. 

\begin{problem}[Safety]
Given a POMDP as in Definition~1, a set of unsafe states $Q_u \subseteq Q$, and a safety requirement constant $\lambda$, check whether
\begin{equation}\label{eq:safetyalltime}
\sum_{q \in Q_u}b_{t}(q) \le \lambda,~~\forall t \in \mathbb{Z}_{\ge0}.
\end{equation}
\end{problem}

In addition to safety, we are interested in checking whether an \emph{optimality} criterion is satisfied.  
\begin{problem}[Optimality]
Given a POMDP as in Definition~1, the reward function $R:Q \times A \to \mathbb{R}$, in which $R(q,a)$ denotes the reward of taking action $a$ while being at state $q$, a point in time $\tau \in \mathbb{Z}_{\ge 0}$, and an optimality requirement $\gamma$, check whether
\begin{equation}\label{eq:optimality}
\sum_{s=0}^{\tau} r(b_s,a_s) \le \gamma,
\end{equation}
where $r(b_s,a_s) = \sum_{q \in Q} b_t R(q,a_t)$.
\end{problem}
\smallskip
%




\section{Treating POMDPs as Hybrid Systems}

{ Given a POMDP $\mathcal{P}$ as in Definition 1, checking whether~\eqref{eq:safety} and~\eqref{eq:optimality} hold by solving the POMDP directly is undecidable problem~\cite{ChatterjeeCT16}.  In this section, we demonstrate that POMDPs can be represented as a special hybrid system~\cite{goebel2009hybrid}, i.e., a discrete-time switched system~\cite{ahmadi2008non,KUNDU2017191,zhang2009exponential}. We take advantage of this representation in the sequel to formulate indirect methods for analyzing a POMDP.}

According to the belief update equation~\eqref{equation:belief update}, the current beliefs evolve according to the belief values at the previous step, actions at the previous step, and the current observations. Furthermore, the belief evolution \textit{jumps} into different modes based on the actions taken. Therefore, we can represent the belief evolutions as a discrete-time switched system, where the actions $a \in A$ define the switching modes.  Formally, the belief \emph{dynamics}~\eqref{equation:belief update} can be described as
\begin{equation}\label{equation:discretesystem1}
b_t = f_a\left(b_{t-1},z_t\right),
\end{equation}
where $b_t$ denote the belief vector at time $t$ belonging to the belief unit simplex $\mathcal{B}$ and $b_0 \in \mathcal{B}_0 \subset \mathcal{B}$ (in particular, $\mathcal{B}_0$ can be a singleton) representing the set of initial beliefs (prior). In~\eqref{equation:discretesystem1}, $a \in A$ denote the actions that can be interpreted as the switching modes, $z \in Z$ are the observations representing inputs, and $t \in \mathbb{Z}_{\ge 1}$ denote the discrete time instances. The (rational) vector fields $\{f_{a}\}_{a \in A}$ with $f_a: [0,1]^{|Q|} \to [0,1]^{|Q|} $ are described as the vectors with rows
$$
f_a^{q'}(b,\cdot,z) = \frac{O(q',a,z)\sum_{q\in Q}T(q,a,q')b_{t-1}(q)}{\sum_{q'\in Q}O(q',a,z)\sum_{q\in Q},T(q,a,q')b_{t-1}(q)},
$$
where $f_a^{q'}$ denotes the $q'$th row of $f_a$.  


A policy then induces regions (partitions) in the belief space where an action is applied. We denote the index of these partitions of the belief space as $\alpha \in \Gamma$, where $\Gamma$ is a finite set.  Then, a policy for a POMDP can be characterized as
\begin{equation}\label{eq:policy}
\pi(b) = 
\begin{cases}
a_1, & b \in \mathcal{B}_{\alpha_1} \\
\vdots & \vdots \\
a_n, & b \in \mathcal{B}_{\alpha_m},
\end{cases}
\end{equation}
inducing the dynamics
\begin{equation}\label{eq:dynpolicy}
b_t = 
\begin{cases}
f_{a_1}(b,z), & b \in \mathcal{B}_{\alpha_1} \\
\vdots & \vdots \\
f_{a_n}(b,z), & b \in \mathcal{B}_{\alpha_m},
\end{cases}
\end{equation}
where $n$ and $m$ are integers. Note that the number of partitions $|\Gamma|=m$ is not necessarily less than or greater than $|A|$. Indeed, the number of $|\Gamma|=m$ depends on the method used for synthesizing the policy. For example, in the value iteration method for POMDPs, $|\Gamma|=m$ is a function of the so called $\alpha$-vectors~\cite{hauskrecht2000value}.

Given the above switched system representation, we can consider two classes of problems in POMDPs:
\begin{itemize}
\item [1.] \textit{Arbitrary-Policy Verification}: This case corresponds to analyzing~\eqref{equation:discretesystem1} under \emph{arbitrary switching} with switching modes given by $a \in A$.
\item [2.] \textit{Fixed-Policy Verification}: This corresponds to analyzing~\eqref{equation:discretesystem1} under \emph{state-dependent switching}. Indeed, the policy $\pi:\mathcal{B} \to A$ determines regions in the belief space where each mode (action) is active.
\end{itemize}

Both cases of switched systems with~\emph{arbitrary switching} and~\emph{state-dependent switching} are well-known in the systems and controls literature~(see~\cite{liberzon2003switching,hespanha2004uniform} and references therein).
{ Note that the second problem is well-known in the probabilistic planning~\cite{DBLP:journals/jair/SteinmetzHB16,kolobov2012planning} and model checking~\cite{Kat16} communities. 
For a fixed policy and a finite state space, the problem reduces to solving a linear equation system to compute the set of reachable states in the underlying Markov chain. }
The next example illustrates the proposed switched system representation for POMDPs with a fixed policy. 

\begin{example}Consider a POMDP with two states $\{q_1,q_2\}$, two actions $\{a_1,a_2\}$, and $z = \{z_1, z_2 \} \in {Z}$. Action $a_1$ correspond to remaining in the same state, while action $a_2$ to making a transition to the other state. Observation $z_i$ correspond to the current state being $q_i$. The policy
\begin{equation} \label{eq:example-sds}
\pi=
\left\{\begin{array}{lr}
        a_1, & b \in \mathcal{B}_1,\\
        a_2, & b \in \mathcal{B}_2\\
        \end{array}\right.
\end{equation}
leads to different switching modes based on whether the states belong to the regions $\mathcal{B}_1=\{b \mid 0 \le b(q_1) \le \varepsilon\}$ or $\mathcal{B}_2= \{b \mid  \varepsilon < b(q_1) \le 1\}$ (see Figure~\ref{figure1}). That is, the belief update equation~\eqref{equation:discretesystem1} is given by 
\begin{equation} \label{eq:example-dynamics}
b_t=
\left\{\begin{array}{lr}
        f_{a_1}\left(b_{t-1},z_t\right), & b \in \mathcal{B}_1,\\
        f_{a_2}\left(b_{t-1},z_t\right), & b \in \mathcal{B}_2.\\
        \end{array}\right.
\end{equation}
Note that the belief space is given by $\mathcal{B}=\mathcal{B}_1 \cup \mathcal{B}_2 = \{ b \mid b(q_1)+b(q_2)=1\}$.
\end{example}
\begin{figure}[tbp] 
\begin{center} 
\includegraphics[width=5cm]{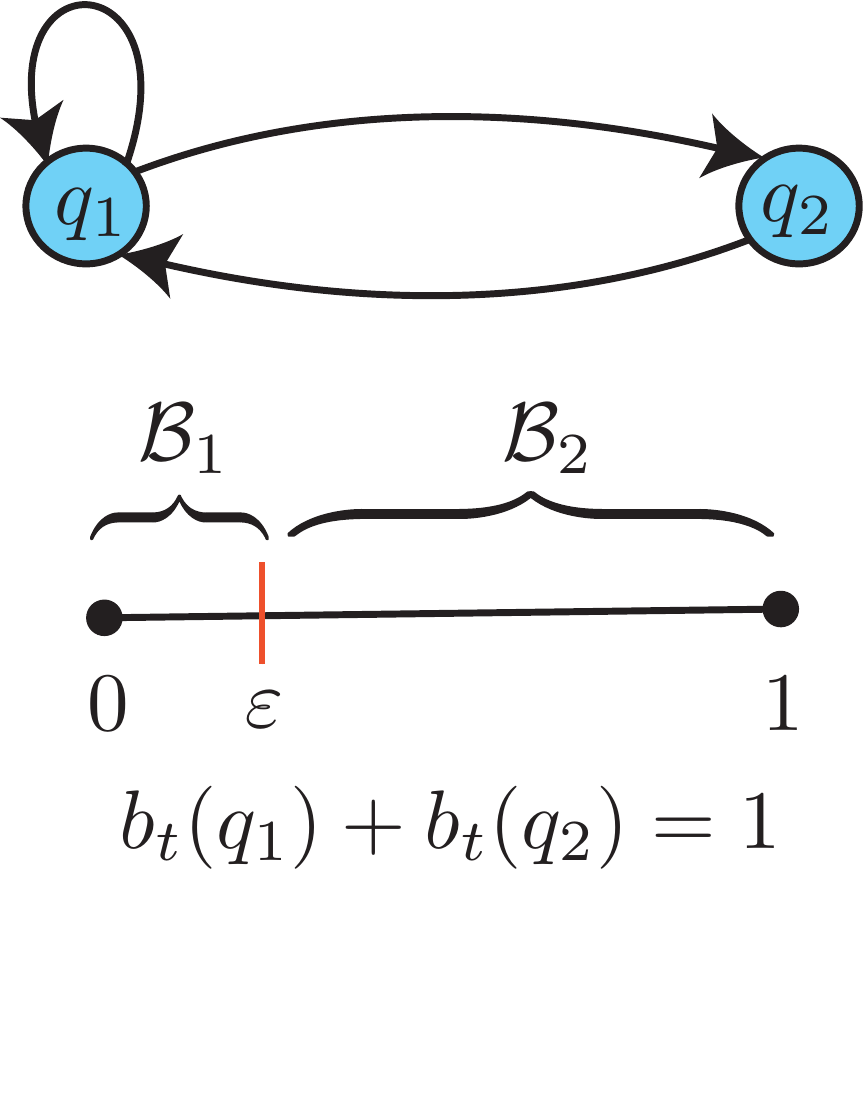}
\vspace{-1cm}
\caption{An example of a POMDP with two states and the state-dependent switching modes induced by the policy~\eqref{eq:example-sds}. }
\label{figure1}
\end{center}
\end{figure}

\section{POMDP Reachability Analysis \\Using Lyapunov Functions}\label{sec:lyapunov}

Based on the switched system representation discussed in the previous section, we can use tools from control theory to analyze the evolution of the solutions of the belief update equation including the reachable belief space without the need to solve the POMDP explicitly. 

From a control theory standpoint, estimating reachable sets, invariant sets, or regions-of-attractions (ROAs) are well-studied problems in the  literature~\cite{blanchini2008set}. Recent results in this area include learning ROAs using Gaussian processes and active learning~\cite{berkenkamp2016safe}, computation of controlled invariant sets for switching affine systems~\cite{legat2018computing}, ROA estimation using formal language~\cite{burchardt2007estimating}, and  finding invariant sets based on barrier functions~\cite{Gurriet2018}. In the following, we propose a basic technique for over-approximating the reachable belief spaces of POMDPs using Lyapunov functions.


The next theorem uses sub-level sets of Lyapunov functions to  over-approximate the reachable belief space as illustrated in Figure~\ref{figure:innerapprox}.

\begin{figure}[tbp] 
\begin{center} 
\includegraphics[width=6cm]{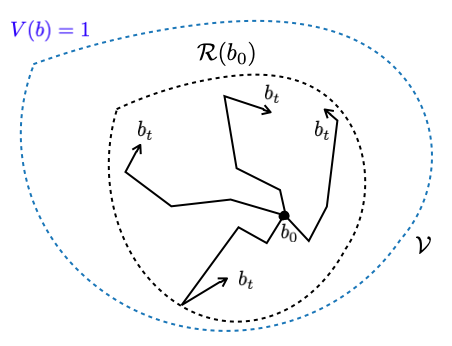}

\caption{Over-approximation of the reachable belief space using sub-level sets of a Lyapunov function. }
\label{figure:innerapprox}
\end{center}
\end{figure}

\begin{thm}\label{theorem:main}
Given the POMDP $\mathcal{P}=(Q,b_0,A,T,Z,O)$ as in Definition~1, consider the belief update dynamics~\eqref{equation:discretesystem1}. If there exists a function $V:\mathcal{B} \to \mathbb{R}$ such that
\begin{multline}\label{eq:con1}
V\left(  f_a\left(b_{t-1},z\right)  \right) - V\left( b_{t-1}  \right)<0,\\~\text{for all}~a \in A,~z\in Z,~b_{t-1} \in \mathcal{V},
\end{multline}
and 
\begin{equation}\label{eq:con2}
b_0 \in \mathcal{V},
\end{equation}
where $\mathcal{V} = \{b \in \mathcal{B} \mid V(b)\le1\}$, then $b_t \in \mathcal{V}$ for all $t \in \mathbb{Z}_{\ge 0}$, and, in particular, $   \mathcal{R}(b_0) \subseteq \mathcal{V}$.
\end{thm}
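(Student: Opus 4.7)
The plan is to prove the theorem by a straightforward induction on the discrete time index $t$, using the sub-level set $\mathcal{V}$ as the invariant and the strict decrease condition \eqref{eq:con1} to keep trajectories inside it. The statement to prove inductively is: for every $t \in \mathbb{Z}_{\ge 0}$ and every admissible action/observation history $(a_0,z_1,a_1,z_2,\ldots,a_{t-1},z_t)$, the belief $b_t$ generated by iterating \eqref{equation:discretesystem1} lies in $\mathcal{V}$. Once this is established, $\mathcal{R}(b_0)\subseteq\mathcal{V}$ follows immediately, since by Problem~1 every element of $\mathcal{R}(b_0)$ is exactly such a $b_t$.

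First I would set up the base case $t=0$, which is hypothesis \eqref{eq:con2}: $b_0\in\mathcal{V}$. For the inductive step, assume $b_{t-1}\in\mathcal{V}$, i.e.\ $V(b_{t-1})\le 1$. Pick any $a\in A$ and any $z\in Z$ consistent with the POMDP dynamics and set $b_t = f_a(b_{t-1},z)$. Applying \eqref{eq:con1} at the point $b_{t-1}\in\mathcal{V}$ gives
\begin{equation*}
V(b_t) = V\bigl(f_a(b_{t-1},z)\bigr) < V(b_{t-1}) \le 1,
\end{equation*}
so $b_t\in\mathcal{V}$, closing the induction. Note that the quantification ``for all $a\in A$, $z\in Z$'' in \eqref{eq:con1} is exactly what is needed to cover the arbitrary-policy, arbitrary-observation case, which is the setting of Problem~1; no assumption on which action was actually selected is used.

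To conclude $\mathcal{R}(b_0)\subseteq\mathcal{V}$, I would recall the definition of $\mathcal{R}(b_0)$ as the set of all beliefs reachable from $b_0$ under some finite sequence of actions and observations. Any such belief is of the form $b_t = f_{a_{t-1}}(\cdots f_{a_0}(b_0,z_1)\cdots,z_t)$ for some $t\ge 0$, and the induction above shows every such point lies in $\mathcal{V}$. Since $\mathcal{V}\subseteq\mathcal{B}$ by construction, the containment is established.

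I do not anticipate a serious obstacle: the theorem is essentially a discrete-time Lyapunov invariance argument, and the strict inequality in \eqref{eq:con1} is already strong enough to make the inductive step trivial (in fact non-strict decrease would suffice for invariance; strictness is presumably used later for SOS implementation or for sharper reachability claims). The only subtle point worth flagging in the write-up is that \eqref{eq:con1} is required on all of $\mathcal{V}$ (not merely on the forward orbit), which is what permits the induction to be carried out without a priori knowing the trajectory; I would state this explicitly to make clear why $\mathcal{V}$ itself, and not just the realized reachable set, must serve as the domain of the Lyapunov-type inequality.
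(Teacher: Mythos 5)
Your proof is correct and takes essentially the same route as the paper's: the paper's own proof concludes with precisely this induction (base case from \eqref{eq:con2}, inductive step $V(b_t)=V(f_a(b_{t-1},z))<V(b_{t-1})\le 1$ using \eqref{eq:con1} on $\mathcal{V}$). The paper merely prefaces it with a contradiction/telescoping-sum argument that your direct induction renders unnecessary, since that argument tacitly relies on the same invariance you establish.
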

\begin{proof}
The proof relies on the fact that the belief trajectories point inward at the boundary of $\mathcal{V}$; hence, they remain inside $\mathcal{V}$. We prove by contradiction. Assume, at time $T \neq 0$, we have $b_T \notin \mathcal{V}$ for some $a \in A$ or $z\in Z$. Thus, $V(b_T)>1$. Inequality~\eqref{eq:con1} implies 
\begin{equation*}
V\left(  b_t  \right) - V\left( b_{t-1}  \right)<0,~\text{for all}~a \in A,~z\in Z,~b \in \mathcal{V}.
\end{equation*}
Summing up the above expression from time $1$ to $T$ yields 
\begin{equation*}
\sum_{t=1}^T \left(V\left(  b_t  \right) - V\left( b_{t-1}  \right) \right)<0,~\text{for all}~a \in A,~z\in Z,
\end{equation*}
which simplifies to 
\begin{equation*}
V\left(  b_T  \right) - V\left( b_{0}  \right) <0,~\text{for all}~a \in A,~z\in Z,~b \in \mathcal{V}.
\end{equation*}
That is, $V\left(  b_T  \right) < V\left( b_{0}  \right) $. In addition, from~\eqref{eq:con2}, we have $V\left( b_{0}  \right) \le 1$. Then, it follows that $V\left(  b_T  \right)  \le 1$, which contradict the hypothesis $b_T \notin \mathcal{V}$. 

At this point, we prove by induction that $ \mathcal{R}(b_0) \subseteq \mathcal{V}$. Let $\mathcal{R}_t (b_0)$ denote the reachable belief space at time $t$. At $t=0$, from (13), we have $b_0 \in \mathcal{V}$. Since $\mathcal{R}_0(b_0)=\{b_0\}$, then $\mathcal{R}_0(b_0) \subseteq \mathcal{V}$. At time $t=1$, from (12), we have 
$$
V(b_1)-V(b_0) <0, \forall b_0 \in \mathcal{V},
$$
which implies that $V(b_1)<V(b_0) $ for all $ b_0 \in \mathcal{V}=\{ b \in \mathcal{B} \mid V(b_0)\le 1\}$. Hence, $V(b_1)<V(b_0)\le 1 $ and $b_1 \in \mathcal{V}$. In addition, $\mathcal{R}_1(b_0) = \{b_0,b_1\}$ and since $b_0,b_1 \in \mathcal{V}$, then $\mathcal{R}_1(b_0) \subseteq \mathcal{V}$. Then, by induction we have for any $t=\tau$, $\tau>0$, $b_\tau \in \mathcal{V}$ and $\mathcal{R}_\tau(b_0) \subseteq \mathcal{V}$.
\end{proof}

In practice, we may have a large number of actions for a POMDP. Therefore, it may be difficult to find a single Lyapunov function $V$ that satisfies the conditions of Theorem~\ref{theorem:main}. Next, we show that the computation of the Lyapunov function can be decomposed into finding a set of local Lyapunov functions for each action. This method is  illustrated in Figure~\ref{figure:maxLyap}.

\begin{figure}[tbp] 
\begin{center} 
\includegraphics[width=7cm]{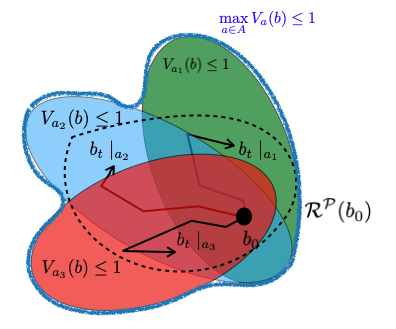}

\caption{Over-approximation of the reachable belief space using the sub-level sets of a set of local Lyapunov functions. }
\label{figure:maxLyap}
\end{center}
\end{figure}

\begin{thm}\label{thm:decomposition}
Given the POMDP $\mathcal{P}=(Q,b_0,A,T,Z,O)$ as in Definition~1, consider the belief update dynamics~\eqref{equation:discretesystem1}. If there exist a set of functions $V_a:\mathcal{B} \to \mathbb{R}$, $a \in A$ such that
\begin{multline}\label{eq:con1x}
V_{a}\left(  f_{a'}\left(b_{t-1},z\right)  \right) - V_{a}\left( b_{t-1}  \right)<0,\\~\forall z\in Z,~b_{t-1} \in \mathcal{V}_{a},
\end{multline}
for all $(a,a') \in A^2$ and 
\begin{equation}\label{eq:con2x}
b_0 \in \mathcal{V}_a,
\end{equation}
for all $a \in A$, where $\mathcal{V}_a = \{b \in \mathcal{B} \mid V_a(b)\le1\}$, then $b_t \in \mathcal{V}_{max}=\{b \in \mathcal{B} \mid \max_{a\in A} V_a(b) \le 1\}$ for all $t \in \mathbb{Z}_{\ge 0}$, and, in particular, $  \mathcal{R}(b_0)  \subseteq \mathcal{V}_{max} $.
\end{thm}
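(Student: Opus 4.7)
The plan is to reduce Theorem~\ref{thm:decomposition} to a family of applications of Theorem~\ref{theorem:main}, one per action, and then take an intersection. The key structural observation is that
\[
\mathcal{V}_{max} \;=\; \bigcap_{a \in A} \mathcal{V}_a,
\]
since $\max_{a \in A} V_a(b) \le 1$ holds exactly when $V_a(b) \le 1$ for every $a$. So establishing $b_t \in \mathcal{V}_{max}$ is equivalent to establishing $b_t \in \mathcal{V}_a$ simultaneously for every $a \in A$.

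First, I would fix an arbitrary $a \in A$ and look only at the function $V_a$. Hypothesis~\eqref{eq:con1x} says that for \emph{every} switching mode $a' \in A$ and every observation $z \in Z$, the value of $V_a$ strictly decreases along the belief update whenever the current belief lies in $\mathcal{V}_a$. Together with $b_0 \in \mathcal{V}_a$ from~\eqref{eq:con2x}, this is precisely the pair of hypotheses~\eqref{eq:con1}--\eqref{eq:con2} of Theorem~\ref{theorem:main}, just with the single Lyapunov function $V$ replaced by $V_a$ and with the vector field $f_a$ replaced by the arbitrary-switching family $\{f_{a'}\}_{a' \in A}$. Applying Theorem~\ref{theorem:main} therefore yields $b_t \in \mathcal{V}_a$ for all $t \in \mathbb{Z}_{\ge 0}$.

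Since $a$ was arbitrary, the same conclusion holds for every $a \in A$ simultaneously, hence $b_t \in \bigcap_{a \in A} \mathcal{V}_a = \mathcal{V}_{max}$ for all $t \ge 0$. As this argument is carried out for an arbitrary action sequence and an arbitrary observation sequence, every belief reachable from $b_0$ lies in $\mathcal{V}_{max}$, giving $\mathcal{R}(b_0) \subseteq \mathcal{V}_{max}$, as desired.

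I do not anticipate a substantive obstacle: the only thing to be careful about is that condition~\eqref{eq:con1x} is quantified over \emph{both} $a$ and $a'$, so that when $a$ is fixed the decrease of $V_a$ holds uniformly over the entire switching family, which is what justifies invoking Theorem~\ref{theorem:main} in its arbitrary-switching form. If one wanted a self-contained argument rather than a reduction, one could instead repeat the contradiction/induction scheme of Theorem~\ref{theorem:main} for each $V_a$ separately and then intersect, but the reduction is cleaner and emphasizes the decomposition character of the result.
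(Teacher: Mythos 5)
Your argument is correct, and it reaches the same place as the paper by a slightly different (and arguably cleaner) route. The paper's proof constructs the single composite Lyapunov function $V=\max_{a\in A}V_a$ and checks that it satisfies the hypotheses \eqref{eq:con1}--\eqref{eq:con2} of Theorem~\ref{theorem:main}: condition \eqref{eq:con2x} for every $a$ gives $\max_a V_a(b_0)\le 1$, and condition \eqref{eq:con1x} gives the strict decrease of each $V_a$ on $\mathcal{V}_a$, from which the strict decrease of the maximum on $\{b \mid \max_a V_a(b)\le 1\}$ follows (using finiteness of $A$ and the inclusion $\mathcal{V}_{max}\subseteq\mathcal{V}_a$); a single application of Theorem~\ref{theorem:main} then yields $\mathcal{R}(b_0)\subseteq\mathcal{V}_{max}$. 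You instead apply Theorem~\ref{theorem:main} once per action, with $V=V_a$ against the full arbitrary-switching family $\{f_{a'}\}_{a'\in A}$ (which is exactly what the quantification over $(a,a')\in A^2$ in \eqref{eq:con1x} licenses), obtaining $\mathcal{R}(b_0)\subseteq\mathcal{V}_a$ for each $a$, and then intersect, using $\mathcal{V}_{max}=\bigcap_{a\in A}\mathcal{V}_a$. The two are equivalent at the level of sets, but your version sidesteps the one step the paper treats somewhat loosely—namely that pointwise strict decrease of each $V_a$ on its own sub-level set implies strict decrease of the max on the smaller set $\mathcal{V}_{max}$—while the paper's version buys an explicit single Lyapunov function $\max_a V_a$ in the literal form demanded by Theorem~\ref{theorem:main} (and matching its illustration of the over-approximation). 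Either way the decomposition into per-action certificates, coupled over all pairs $(a,a')$, is the substance of the result, and you have it.
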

\begin{proof}
We show that if~\eqref{eq:con1x}-\eqref{eq:con2x}  are satisfied then the Lyapunov function $V = \max_{a \in A} V_a$ satisfies the conditions of Theorem~\ref{theorem:main}. If \eqref{eq:con2x} holds for all $a \in A$, we have $V_a(b_0) \le 1$, which in turn implies that $\max_{a\in A} V_a(b_0) \le 1$, as well. Thus, $V = \max_{a \in A} V_a$ satisfies \eqref{eq:con2}.
Furthermore, if~\eqref{eq:con1x} is satisfied for all $(a,a') \in A^2$, we have $V_{a}(f_{a'}(b_{t-1},z)) < V_{a}(b_{t-1})$ for $b_{t-1}\in \mathcal{V}_{a}$, which in turn implies that $\max_{a} V_{a}(f_{a'}(b_{t-1},z)) < \max_{a} V_{a}(b_{t-1})$ for $b_{t-1}$ belonging to $\max_{a} V_{a}(b_{t-1}) \le 1$ and all $a' \in A$. That is, \eqref{eq:con1} is satisfied with $V = \max_{a\in A} V_a$. Then, from Theorem~\ref{theorem:main}, we have $b_t \in \mathcal{V}_{max}$ for all $t \in \mathbb{Z}_{\ge0}$ and $   \mathcal{R}(b_0) \subseteq \mathcal{V}_{max}$.
\end{proof}

If a policy $\pi$ is given, then we can compute a family of local Lyapunov functions. The union of the 1-sublevel sets of  these functions  over-approximates the reachable belief space.

\begin{thm}\label{theorem2}
Given the POMDP $\mathcal{P}=(Q,b_0,A,T,Z,O)$ and the policy $\pi:\mathcal{B} \to A$ and an induced partitioning $\{\mathcal{B}_\alpha\}$ such that $\mathcal{B}=\cup_{\alpha} {\mathcal{B}_\alpha}$ as given in~\eqref{eq:policy}, consider the belief update dynamics~\eqref{equation:discretesystem1}. If there exists a family of functions $V_\alpha:\mathcal{B}_\alpha \to \mathbb{R}$, $\alpha \in \Gamma$, such that
\begin{multline}\label{eq:con11}
V_\alpha\left(  f_a\left(b_{t-1},z\right)  \right) - V_\alpha\left( b_{t-1}  \right)<0,\\~\text{for all}~z\in Z,~b_{t-1} \in \mathcal{V}_\alpha,
\end{multline}
and 
\begin{equation}\label{eq:con22}
b_0 \in \mathcal{V}_\alpha,
\end{equation}
where $\mathcal{V}_\alpha = \{b \in \mathcal{B}_\alpha \mid V_\alpha (b)\le1\}$, then $b_t \in \cup_{\alpha} \mathcal{V}_\alpha$ for all $t \in \mathbb{Z}_{\ge 0}$, and, in particular, $ \mathcal{R}(b_0) \subseteq  \cup_{\alpha \in \Gamma} \mathcal{V}_\alpha $.
\end{thm}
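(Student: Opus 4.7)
The plan is to imitate the induction argument used in the proof of Theorem~\ref{theorem:main}, but now partition-wise: on each cell $\mathcal{B}_\alpha$ of the policy-induced partition the dynamics~\eqref{eq:dynpolicy} is governed by the single vector field $f_{\pi(b)}$, and on that cell the local Lyapunov function $V_\alpha$ plays exactly the role $V$ played in the arbitrary-switching case. First I would show by induction on $t \in \mathbb{Z}_{\ge 0}$ that there exists $\alpha_t \in \Gamma$ with $b_t \in \mathcal{V}_{\alpha_t}$; since $\mathcal{V}_{\alpha_t} \subseteq \cup_\alpha \mathcal{V}_\alpha$, this immediately yields $\mathcal{R}_t(b_0) \subseteq \cup_\alpha \mathcal{V}_\alpha$, and taking the union over all $t$ and over all admissible observation sequences gives the claim $\mathcal{R}(b_0) \subseteq \cup_\alpha \mathcal{V}_\alpha$.

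For the base case, hypothesis~\eqref{eq:con22} places $b_0$ in the appropriate $\mathcal{V}_\alpha$ (indeed in all of them). For the inductive step, let $\alpha_{t-1}$ be the partition index with $b_{t-1} \in \mathcal{B}_{\alpha_{t-1}}$, which exists because $\cup_\alpha \mathcal{B}_\alpha = \mathcal{B}$; the fixed policy selects the action $a$ appearing in~\eqref{eq:con11} from this index. Applying~\eqref{eq:con11} with Lyapunov function $V_{\alpha_{t-1}}$ and the realized observation $z_t$ gives
\[
V_{\alpha_{t-1}}(b_t) \;<\; V_{\alpha_{t-1}}(b_{t-1}) \;\le\; 1,
\]
where the second inequality uses $b_{t-1}\in \mathcal{V}_{\alpha_{t-1}}$ from the previous induction step (or from~\eqref{eq:con22} at $t=1$). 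Hence $b_t \in \mathcal{V}_{\alpha_{t-1}}$, closing the induction. A contradiction-style wrap-up, summing the strict decrement telescopically as in Theorem~\ref{theorem:main}, is also available and gives the same conclusion.

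The main obstacle is the bookkeeping at partition boundaries. Once $b_t$ lands in $\mathcal{V}_{\alpha_{t-1}}$, it may actually sit inside a \emph{different} partition $\mathcal{B}_{\alpha_t}$, and to continue the induction one step further one must also know $b_t \in \mathcal{V}_{\alpha_t}$ so that~\eqref{eq:con11} for index $\alpha_t$ can be invoked at time $t+1$. This is precisely where hypothesis~\eqref{eq:con22}, stated for every $\alpha \in \Gamma$, does the extra work relative to Theorem~\ref{theorem:main}: whenever the trajectory transitions to a new cell, the initial containment in \emph{every} $\mathcal{V}_\alpha$, together with repeated use of the strict Lyapunov decrement within each visited cell, keeps the belief inside $\cup_\alpha \mathcal{V}_\alpha$. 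Alternatively, the theorem can be viewed as a direct specialization of Theorem~\ref{thm:decomposition}: take the action-indexed Lyapunov functions there and reindex them by the partitions $\alpha$ that the policy $\pi$ assigns to each action, whereupon $V = \max_{\alpha} V_\alpha$ satisfies the hypotheses of Theorem~\ref{theorem:main} and the inclusion $\mathcal{R}(b_0) \subseteq \cup_\alpha \mathcal{V}_\alpha$ follows.
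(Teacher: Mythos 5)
Your trajectory-wise induction is a genuinely different route from the paper's (the paper applies Theorem~\ref{theorem:main} once per action, concluding $\mathcal{R}_a(b_0)\subseteq\mathcal{V}_\alpha$ for each fixed $a$, and then takes unions, $\mathcal{R}(b_0)=\cup_a\mathcal{R}_a(b_0)\subseteq\cup_{\alpha}\mathcal{V}_\alpha$), but as written it has a genuine gap that you identify and then do not actually close: the cell-crossing step. Your inductive step yields $V_{\alpha_{t-1}}(b_t)<V_{\alpha_{t-1}}(b_{t-1})\le 1$, but $\mathcal{V}_{\alpha_{t-1}}=\{b\in\mathcal{B}_{\alpha_{t-1}}\mid V_{\alpha_{t-1}}(b)\le 1\}$ is a subset of the cell $\mathcal{B}_{\alpha_{t-1}}$, so concluding $b_t\in\mathcal{V}_{\alpha_{t-1}}$ additionally requires $b_t\in\mathcal{B}_{\alpha_{t-1}}$; indeed $V_{\alpha_{t-1}}:\mathcal{B}_{\alpha_{t-1}}\to\mathbb{R}$, so even evaluating $V_{\alpha_{t-1}}(f_a(b_{t-1},z))$ implicitly assumes the successor remains in the same cell. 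When the belief enters a new cell $\mathcal{B}_{\alpha_t}$, continuing the induction needs $V_{\alpha_t}(b_t)\le 1$, and nothing in the hypotheses delivers this: \eqref{eq:con22} constrains only the initial belief $b_0$, not the belief at the instant of entering a new cell, and \eqref{eq:con11} for index $\alpha_t$ applies only to points already in $\mathcal{V}_{\alpha_t}$. Your assertion that the initial containment in every $\mathcal{V}_\alpha$ ``does the extra work'' at transitions is therefore not an argument; it is exactly the missing step.

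The fallback via Theorem~\ref{thm:decomposition} does not repair this. That theorem requires the cross-decrement conditions $V_a(f_{a'}(b,z))<V_a(b)$ for \emph{all} pairs $(a,a')\in A^2$, whereas \eqref{eq:con11} involves, on each cell, only the single action the policy assigns there; moreover the $V_\alpha$ of Theorem~\ref{theorem2} are defined only on their cells, so $\max_\alpha V_\alpha$ is not a well-defined function on $\mathcal{B}$ and cannot be fed into Theorem~\ref{theorem:main}. If you want to salvage the trajectory induction, you need an additional hypothesis or argument guaranteeing that whenever the belief enters a cell $\mathcal{B}_\alpha$ it does so inside $\mathcal{V}_\alpha$ (e.g., a compatibility condition of the form $V_\alpha(f_{a_{\alpha'}}(b,z))\le 1$ for $b\in\mathcal{V}_{\alpha'}$ and all $z\in Z$ between cells the dynamics can connect); that is precisely the content your write-up currently asserts without proof. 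The paper's per-action decomposition sidesteps this bookkeeping by bounding each $\mathcal{R}_a(b_0)$ separately, at the price of relying on the decomposition $\mathcal{R}(b_0)=\cup_a\mathcal{R}_a(b_0)$, so matching its level of detail would also require you to justify how switching trajectories are covered.
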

\begin{proof}
From Theorem~\ref{theorem:main}, we can infer that, for each fixed $a \in A$, we have $ \mathcal{R}_a (b_0) \subseteq  \mathcal{V}_\alpha$, where $\mathcal{R}_a (b_0)$ is the reachable belief space under action $a$ starting at $b_0$. Since the support of each $V_{\alpha_i}$ is $\mathcal{B}_{\alpha_i}$, $\mathcal{V}_{\alpha_i} \subseteq \mathcal{B}_{\alpha_i}$, $\cup_{\alpha \in \Sigma}  \mathcal{B}_{\alpha} =\mathcal{B}$, then $\cup_{\alpha \in \Sigma}  \mathcal{V}_{\alpha} =\mathcal{V} \subseteq \mathcal{B}$. Furthermore, since $  \mathcal{R}_a (b_0)  \subseteq  \mathcal{V}_\alpha $, then $  \cup_{a \in A} \mathcal{R}_a (b_0) = \mathcal{R}(b_0) \subseteq  \cup_{\alpha  \in \Gamma}  \mathcal{V}_{\alpha }.$
\end{proof}

We remark at the end of this section that the existence of a policy $\pi$ for a POMDP removes all non-determinism and  partial observability, resulting in
an induced Markov chain. In this case, in~\cite{milias2014optimization}, a method based on Foster-Lyapunov functions was used  for analyzing Markov chains, which may be also adapted to verify POMDPs with a fixed policy.

\section{POMDP Safety Verification \\Using Barrier Certificates}

In the following, we show how we can use barrier certificates to verify properties of the switched systems induced by a POMDP. We begin by verifying safety for a given POMDP.

Let us define the following unsafe set
\begin{equation}\label{eq:unsafeset}
\mathcal{B}_u^s = \left \{ b \in \mathcal{B} \mid \sum_{q \in Q_u} b_{\tau}(q)  > \lambda  \right\},
\end{equation}
which is the complement of~\eqref{eq:safety}.

\begin{thm}\label{theorem-barrier-discrete}
Given the POMDP $\mathcal{P}=(Q,b_0,A,T,Z,O)$ as in Definition~1, consider the belief update dynamics~\eqref{equation:discretesystem1}. Given a set of initial beliefs $\mathcal{B}_0 \subset [0,1]^{|Q|}$, an unsafe set $\mathcal{B}^s_u$ as given in~\eqref{eq:unsafeset} ($\mathcal{B}_0 \cap \mathcal{B}^s_u = \emptyset$), and a constant $\tau \in \mathbb{Z}_{\ge 0}$, if there exists a function $B:\mathbb{Z} \times \mathcal{B} \to \mathbb{R}$ called the barrier certificate such that
\begin{equation}\label{equation:barrier-condition1}
B(\tau,b_{\tau})  > 0, \quad \forall b_{\tau} \in \mathcal{B}^s_u,
\end{equation}
\begin{equation}\label{equation:barrier-condition11}
 B(0,b_0) < 0, \quad \forall b_0 \in \mathcal{B}_0,
\end{equation}
and
\begin{multline}\label{equation:barrier-condition2}
B\left(t,f_a(b_{t-1},z)\right) - B(t-1,b_{t-1}) \le 0, \\ \forall t \in \{1,2,\ldots,\tau\},~\forall a \in A,~\forall z \in Z, ~\forall b \in \mathcal{B},
\end{multline}
then there exist no solution of the belief update equation~\eqref{equation:discretesystem1} such that $b_0 \in \mathcal{B}_0$, and $b_{\tau} \in \mathcal{B}_u^s$ for all $a \in A$.
\end{thm}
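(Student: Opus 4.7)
The plan is to prove this by contradiction, mirroring the structure of the Lyapunov proof of Theorem~\ref{theorem:main} but exploiting the sign separation of $B$ on the initial and unsafe sets rather than a sublevel-set invariance argument. The barrier certificate plays the role of a ``discrete supermartingale-like'' function: condition~\eqref{equation:barrier-condition2} makes $B(t,b_t)$ non-increasing along \emph{every} belief trajectory (for \emph{every} admissible action-observation sequence), while conditions~\eqref{equation:barrier-condition1} and~\eqref{equation:barrier-condition11} force a strict sign gap between the two ends of any hypothetical unsafe trajectory.

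Concretely, I would assume for contradiction that there exist an initial belief $b_0 \in \mathcal{B}_0$, an action sequence $a_0,a_1,\ldots,a_{\tau-1} \in A$, and an observation sequence $z_1,\ldots,z_\tau \in Z$ such that the belief trajectory generated by \eqref{equation:discretesystem1}, namely $b_t = f_{a_{t-1}}(b_{t-1},z_t)$ for $t=1,\ldots,\tau$, satisfies $b_\tau \in \mathcal{B}_u^s$. Applying \eqref{equation:barrier-condition2} at each time step along this specific trajectory yields
\begin{equation*}
B(t,b_t) - B(t-1,b_{t-1}) \le 0, \qquad t=1,\ldots,\tau.
\end{equation*}
Summing this telescoping inequality from $t=1$ to $t=\tau$ collapses it to $B(\tau,b_\tau) \le B(0,b_0)$.

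The contradiction then follows by combining the two boundary conditions: since $b_0 \in \mathcal{B}_0$, \eqref{equation:barrier-condition11} gives $B(0,b_0) < 0$, so the telescoped inequality forces $B(\tau,b_\tau) < 0$; but since $b_\tau \in \mathcal{B}_u^s$, \eqref{equation:barrier-condition1} gives $B(\tau,b_\tau) > 0$, which is impossible. Hence no such trajectory exists, which is precisely the safety conclusion claimed.

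There is no serious obstacle here: the proof is almost purely algebraic once condition~\eqref{equation:barrier-condition2} is recognized as a one-step non-increase statement that must hold uniformly in $a\in A$ and $z\in Z$, so it applies along any realized switching/observation sequence. The only subtlety worth a sentence in the write-up is that~\eqref{equation:barrier-condition2} is demanded for \emph{all} $a$ and \emph{all} $z$ (arbitrary-policy verification, in the terminology of Section~IV), which is exactly what is needed to make the telescoping argument work regardless of which action was actually chosen and which observation was actually received at each step.
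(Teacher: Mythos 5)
Your proof is correct and follows essentially the same route as the paper: a contradiction argument that telescopes condition~\eqref{equation:barrier-condition2} along the hypothetical trajectory to get $B(\tau,b_\tau)\le B(0,b_0)<0$ via~\eqref{equation:barrier-condition11}, then contradicts~\eqref{equation:barrier-condition1}. If anything, your write-up is slightly cleaner, since you state explicitly that the final contradiction comes from the positivity of $B$ on $\mathcal{B}_u^s$, a step the paper's proof leaves somewhat implicit.
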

\begin{proof}
The proof is carried out by contradiction. Assume at time instance ${\tau}$ there exist a solution to \eqref{equation:discretesystem1} such that $b_0 \in \mathcal{B}_0$ and $b_{\tau} \in \mathcal{B}^s_u$. From inequality~\eqref{equation:barrier-condition2}, we have 
$$
B(t,b_t) \le B(t-1,b_{t-1})
$$
for all $t\in \{1,2,\ldots,{\tau}\}$ and all actions $a \in A$. Hence, $B(t,b_t) \le B(0,b_0)$ for all $t \in \{1,2,\ldots,{\tau}\}$. Furthermore, inequality~\eqref{equation:barrier-condition11} implies that 
$$
B(0,b_0) < 0,
$$
for all $b_0 \in \mathcal{B}_0$.  Since the choice of ${\tau}$ can be arbitrary, this is a contradiction because it implies that $B({\tau},b_{\tau}) \le B(0,b_0) < 0$. Therefore, there exist no solution of \eqref{equation:discretesystem1} such that $b_0 \in \mathcal{B}_0$ and $b_{\tau} \in \mathcal{B}^s_u$ for any sequence of actions $a \in A$. Therefore, the safety requirement is satisfied.
\end{proof}

The above theorem provides conditions under which the POMDP is \emph{guaranteed} to be safe. The next result brings forward a set of conditions, which verifies whether the optimality criterion~\eqref{eq:optimality} is satisfied. This is simply carried out by making the unsafe safe time-dependent (a tube).

\begin{cor}\label{corollary-barrier-discrete:optimality}
Given the POMDP $\mathcal{P}=(Q,b_0,A,T,Z,O)$ as in Definition~1, consider the belief update dynamics~\eqref{equation:discretesystem1} and the optimality criterion $\gamma$ as given by~\eqref{eq:optimality}. Let $\tilde{\gamma}:\mathbb{Z}_{\ge 0} \to \mathbb{R}$ satisfying
\begin{equation} \label{eq:contstraint-on-gamma}
\sum_{s=0}^{\tau} \tilde{\gamma}(s) \le \gamma.
\end{equation}
Given a set of initial beliefs $\mathcal{B}_0 \subset \mathcal{B}$, an unsafe set
\begin{equation}\label{eq:parametrized-unsafe-set}
\mathcal{B}_u^o = \left \{ (t,b) \mid r(b_t,a_t) > \gamma(t) \right\},
\end{equation}
 and a constant $\tau \in \mathbb{Z}_{\ge 0}$, if there exists a function $B:\mathbb{Z} \times \mathcal{B} \to \mathbb{R}$  such that~\eqref{equation:barrier-condition1}-\eqref{equation:barrier-condition2} are satisfied with $\mathcal{B}^o_u$ instead of $\mathcal{B}^s_u$, then for all $b_0 \in \mathcal{B}_0$ the optimality criterion~\eqref{eq:optimality} holds.
\end{cor}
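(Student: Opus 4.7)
The plan is to reduce the corollary to a direct application of Theorem~\ref{theorem-barrier-discrete} by exploiting the time-parametrization of the unsafe set $\mathcal{B}_u^o$. Intuitively, the scalar budget $\gamma$ is split into per-step allowances $\tilde\gamma(s)$, and a single barrier certificate $B(t,b)$ that separates $\mathcal{B}_0$ from $\mathcal{B}_u^o$ forces the instantaneous reward $r(b_t,a_t)$ to lie below $\tilde\gamma(t)$ at every step; summing over $t$ and invoking \eqref{eq:contstraint-on-gamma} recovers \eqref{eq:optimality}.

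Concretely, I would first invoke Theorem~\ref{theorem-barrier-discrete} verbatim but with the time-indexed unsafe set $\mathcal{B}_u^o$ substituted for $\mathcal{B}_u^s$. The three hypotheses \eqref{equation:barrier-condition1}--\eqref{equation:barrier-condition2} are assumed to hold in the corollary, so the contradiction argument in the proof of Theorem~\ref{theorem-barrier-discrete} applies unchanged: since the choice of terminal time in that proof is arbitrary, we in fact obtain that for every $t'\in\{0,1,\dots,\tau\}$ and every belief trajectory $\{b_s\}$ emanating from $b_0\in\mathcal{B}_0$ under any admissible sequence of actions and observations, one has $(t',b_{t'})\notin\mathcal{B}_u^o$. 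By the definition \eqref{eq:parametrized-unsafe-set} of $\mathcal{B}_u^o$, this is equivalent to
\begin{equation*}
r(b_{t'},a_{t'}) \le \tilde\gamma(t'), \qquad \forall\,t'\in\{0,1,\dots,\tau\}.
\end{equation*}

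Summing this inequality over $s=0,\dots,\tau$ and using the assumed decomposition bound \eqref{eq:contstraint-on-gamma} yields
\begin{equation*}
\sum_{s=0}^{\tau} r(b_s,a_s) \;\le\; \sum_{s=0}^{\tau}\tilde\gamma(s) \;\le\; \gamma,
\end{equation*}
which is exactly the optimality criterion \eqref{eq:optimality}, concluding the proof.

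The main obstacle I anticipate is a notational/conceptual one rather than a technical one: Theorem~\ref{theorem-barrier-discrete} is stated as a guarantee at a single terminal time $\tau$, whereas the corollary needs the conclusion to hold at every intermediate step in order for the sum to telescope against $\tilde\gamma$. One must therefore be careful to point out that the barrier inequality \eqref{equation:barrier-condition2} together with \eqref{equation:barrier-condition11} actually gives $B(t',b_{t'})<0$ for all $t'\le\tau$, not just at the endpoint, so the per-step exclusion from $\mathcal{B}_u^o$ is legitimate. Beyond that, one should also remark that \eqref{equation:barrier-condition1} must now be read pointwise in time (i.e.\ $B(t',b_{t'})>0$ on the time-$t'$ slice of $\mathcal{B}_u^o$), which is the natural reading since $B$ and $\mathcal{B}_u^o$ are both time-parametrized.
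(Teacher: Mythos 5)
Your proposal is correct and follows essentially the same route as the paper: apply Theorem~\ref{theorem-barrier-discrete} with $\mathcal{B}_u^o$ in place of $\mathcal{B}_u^s$ to conclude $r(b_t,a_t)\le\tilde{\gamma}(t)$ for all $t\in\{0,1,\ldots,\tau\}$, then sum and invoke \eqref{eq:contstraint-on-gamma}. Your added remark that the per-step exclusion follows from the arbitrariness of the terminal time in the theorem's contradiction argument is a point the paper's proof glosses over, and it is a worthwhile clarification rather than a deviation.
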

\begin{proof}
The proof is straightforward and an application of Theorem~\ref{theorem-barrier-discrete}. If conditions~\eqref{equation:barrier-condition1}-\eqref{equation:barrier-condition2} are satisfied with $\mathcal{B}^o_u$ instead of $\mathcal{B}^s_u$, based on Theorem~\ref{theorem-barrier-discrete}, we conclude that there exist no solution of the belief update equation~\eqref{equation:discretesystem1} such that $b_0 \in \mathcal{B}_0$, and $b_{\tau} \in \mathcal{B}_u^o$ for all $a \in A$. Therefore, we have
$$
r(b_t,a_t) \le \tilde{\gamma}(t), \quad \forall t \in \{0,1,\ldots,\tau\}.
$$
Summing up both sides of the above equation from $t=0$ to $t=\tau$ yields
$$
\sum_{s=0}^{\tau} r(b_s,a_s) \le \sum_{s=0}^{\tau} \tilde{\gamma}(s).
$$
Then, from \eqref{eq:contstraint-on-gamma}, we conclude that $\sum_{s=0}^{\tau} r(b_s,a_s) \le \gamma$.
\end{proof}

The technique used in Corollary~\ref{corollary-barrier-discrete:optimality} is analogous to the one used in~\cite{7171125,AHMADI201733} for bounding (time-averaged) functional outputs of systems described by partial differential equations. The method proposed here, however, can be used for a large class of discrete time systems and the belief update equation is a special case that is of our interest.

In practice, we may have a large number of actions. Then, finding a barrier certificate that satisfies the conditions of Theorem~\ref{theorem-barrier-discrete} becomes computationally prohibitive. In the next result, we show how the calculation of the barrier certificate can be decomposed into finding a set of barrier certificates for each action and then taking the convex hull of them.

\begin{thm}\label{theorem-barrier-convexhull}
Given the POMDP $\mathcal{P}=(Q,b_0,A,T,Z,O)$ as in Definition~1, consider the belief update dynamics~\eqref{equation:discretesystem1}. Given a safety constraint as described in \eqref{eq:safety} with a safety requirement $\lambda$, and a point in time $\tau \in \mathbb{Z}_{\ge0}$, if there exists a set of functions $B_a:\mathbb{Z} \times \mathcal{B} \to \mathbb{R}$, $a \in A$, such that
\begin{equation}\label{equation:barrier-condition1x}
B_a(\tau,b_{\tau})  > 0, \quad \forall b_{\tau} \in \mathcal{B}_u^s,~~\forall a\in A,
\end{equation}
with $\mathcal{B}_u^s$ as described in~\eqref{eq:unsafeset},
\begin{equation}\label{equation:barrier-condition11x}
 B_a(0,b_0) < 0, \quad \text{for} \quad b_0=p_0,~~\forall a \in A,
\end{equation}
and
\begin{multline}\label{equation:barrier-condition2x}
B_{a}\left(t,f_{a'}(b_{t-1},z)\right) - B_a(t-1,b_{t-1}) \le 0, \\ \forall t \in \{1,2,\ldots,\tau\},~ \forall (a,a') \in A^2,~\forall z \in Z, ~\forall b \in \mathcal{B},
\end{multline}
then  the safety requirement $\lambda$ is satisfied, i.e., inequality~\eqref{eq:safety}  holds. Furthermore, the overall barrier certificate is given by $B = \text{co}\{B_a\}_{a \in A}$.
\end{thm}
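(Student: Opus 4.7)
The plan is to reduce this decomposition theorem directly to the single-barrier result of Theorem~\ref{theorem-barrier-discrete} by showing that any convex combination $B(t,b) := \sum_{a \in A} \beta_a B_a(t,b)$, with $\beta_a \ge 0$ and $\sum_{a \in A}\beta_a = 1$ as in the definition of $co\{\cdot\}$, satisfies the three hypotheses \eqref{equation:barrier-condition1}--\eqref{equation:barrier-condition2} of that theorem. Once this is verified, the safety conclusion \eqref{eq:safety} follows immediately.

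First I would handle the two pointwise sign conditions, which are routine: on $\mathcal{B}_u^s$, \eqref{equation:barrier-condition1x} says $B_a(\tau,b_\tau) > 0$ for every $a$, so the convex combination $B(\tau,b_\tau)$ is a nonnegative weighted sum of strictly positive numbers (with weights summing to one), hence strictly positive. In the same way \eqref{equation:barrier-condition11x} gives $B(0,b_0) < 0$ at $b_0=p_0$. Neither step uses anything beyond preservation of strict inequalities by convex combinations.

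The step I expect to carry the real content, and the only place the full strength of the hypothesis is needed, is the monotonicity inequality \eqref{equation:barrier-condition2}. For any action $a$ actually applied at step $t-1$, I would expand
$$
B(t, f_a(b_{t-1}, z)) - B(t-1, b_{t-1}) = \sum_{a' \in A} \beta_{a'} \bigl[B_{a'}(t, f_a(b_{t-1}, z)) - B_{a'}(t-1, b_{t-1})\bigr],
$$
and argue term-by-term nonpositivity. Each bracketed summand is a one-step increment of the barrier $B_{a'}$ evaluated along the dynamics $f_a$ with possibly $a' \neq a$; it is exactly here that hypothesis \eqref{equation:barrier-condition2x}, posed on the full Cartesian product $A^2$ rather than just the diagonal, is essential. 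Without that cross-action coverage, the off-diagonal summands would be uncontrolled and the convex combination could fail the required monotonicity. This is the main obstacle to flag in the write-up, since it is the structural reason the decomposition is valid.

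Having checked all three hypotheses for $B$, I would invoke Theorem~\ref{theorem-barrier-discrete} to conclude that no solution of \eqref{equation:discretesystem1} can start in $\mathcal{B}_0$ and reach $\mathcal{B}_u^s$ at time $\tau$ under any action sequence, which is precisely the safety requirement~\eqref{eq:safety}. The explicit expression $B = co\{B_a\}_{a\in A}$ in the statement then coincides with the combination constructed during the proof, so the second assertion of the theorem requires no additional argument.
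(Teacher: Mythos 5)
Your proposal is correct and follows essentially the same route as the paper: form the convex combination $B = \sum_a \beta_a B_a$, verify that conditions \eqref{equation:barrier-condition1}--\eqref{equation:barrier-condition2} of Theorem~\ref{theorem-barrier-discrete} hold (the cross-action hypothesis on $A^2$ being exactly what makes the one-step increment term-by-term nonpositive under any applied action), and then invoke that theorem. The only difference is notational (you swap the roles of the indices $a$ and $a'$ relative to the paper), which does not affect the argument.
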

\begin{proof}
We show that if~\eqref{equation:barrier-condition1x}-\eqref{equation:barrier-condition2x} are satisfied then the barrier certificate $B = \text{co}\{B_a\}_{a \in A}$ satisfies the conditions of Theorem~\ref{theorem-barrier-discrete}. For each $a \in A$, we multiply both sides of~\eqref{equation:barrier-condition1x} with a constant $\alpha_a$ such that $\sum_{a\in A} \alpha_a =1$. We obtain
$$
\sum_{a\in A} \alpha_a B_a(\tau,b_{\tau})  > 0, \quad \forall b_{\tau} \in \mathcal{B}_u^s,
$$
which implies that $$B(\tau,b_{\tau}) = \text{co}\{B_a(\tau,b_{\tau})\}_{a \in A}>0,~ \forall b_{\tau} \in \mathcal{B}_u^s.$$ Therefore, \eqref{equation:barrier-condition1x} is satisfied with $B = \text{co}\{B_a\}_{a \in A}$. Similarly, we can show that if~\eqref{equation:barrier-condition11x} is satisfied, $B = \text{co}\{B_a\}_{a \in A}$ satisfies~\eqref{equation:barrier-condition11}. Multiplying both sides of \eqref{equation:barrier-condition2x}  with a constant $\alpha_a$ such that $\sum_{a\in A} \alpha_a =1$ and summing over them yields
\begin{multline*}
\sum_{a\in A} \alpha_a \left( B_a\left(t,f_{a'}(b_{t-1},z)\right) -  B_a(t-1,b_{t-1}) \right) \\
= \sum_{a\in A} \alpha_a  B_a\left(t,f_{a'}(b_{t-1},z)\right) -  \sum_{a\in A} \alpha_a  B_a(t-1,b_{t-1})  \le 0, \\ \forall t \in \{1,2,\ldots,\tau\},~\forall z \in Z,~\forall a' \in A, ~\forall b \in \mathcal{B}.
\end{multline*}
which implies that~\eqref{equation:barrier-condition2} is satisfied for $B = \text{co}\{B_a\}_{a \in A}$. Therefore, from Theorem~\ref{theorem-barrier-discrete}, we conclude that the safety requirement $\lambda$ is satisfied. 
\end{proof}

The efficacy of the above result is that we can search for each action-based barrier certificate $B_a$, $a\in A$, independently or in parallel and then verify whether the overall POMDP satisfies a pre-specified safety requirement (see Fig.~\ref{figure2} for an illustration). 

Next, we demonstrate that, if a policy is given, the search for the barrier certificate can be decomposed into the search for a set of local barrier certificates.  We denote by $a_i$ the action active in the partition $\mathcal{B}_i$. 

\begin{thm}\label{theorem-barrier-policygiven}
Given the POMDP $\mathcal{P}=(Q,b_0,A,T,Z,O)$ as in Definition~1, consider the belief update dynamics~\eqref{equation:discretesystem1}. Given a safety constraint as described in \eqref{eq:safety} with a safety requirement $\lambda$, a point in time $\tau \in \mathbb{Z}_{\ge0}$, and a teaching policy $\pi:\mathcal{B} \to A$ as described in~\eqref{eq:policy}, if there exists a set of function $B_i:\mathbb{Z} \times \mathcal{B}_i \to \mathbb{R}$, $i \in \{1,2,\ldots,N\}$, such that
\begin{equation}\label{equation:barrier-condition1xx}
B_i(\tau,b_{\tau})  > 0, \quad \forall b_{\tau} \in \mathcal{B}_u^o \cap \mathcal{B}_i ,~~i \in \{1,2,\ldots,m\},
\end{equation}
with $\mathcal{B}_u^o$ as described in~\eqref{eq:unsafeset},
\begin{equation}\label{equation:barrier-condition11xx}
 B_i(0,b_0) < 0, \quad \text{for} \quad b_0=p_0,~~i \in \{1,2,\ldots,m\},
\end{equation}
and
\begin{multline}\label{equation:barrier-condition2xx}
B_i\left(t,f_{a_i}(b_{t-1},z)\right) - B_i(t-1,b_{t-1}) \le 0, \\ \forall t \in \{1,2,\ldots,\tau\},~\forall z \in Z, ~\forall b \in \mathcal{B}_i,\\~i \in \{1,2,\ldots,m\},
\end{multline}
then  the safety requirement $\lambda$ is satisfied, i.e., inequality~\eqref{eq:safety}  holds. Furthermore, the overall barrier certificate is given by $B = \text{co}\{B_i\}_{i=1}^m$.
\end{thm}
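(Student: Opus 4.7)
The plan is to mirror the reduction used in Theorem~\ref{theorem-barrier-convexhull}: show that the convex combination $B = \mathrm{co}\{B_i\}_{i=1}^{m} = \sum_{i=1}^{m} \alpha_i B_i$ with $\alpha_i \ge 0$ and $\sum_i \alpha_i = 1$ serves as a global barrier certificate satisfying the hypotheses of Theorem~\ref{theorem-barrier-discrete}, so that its conclusion directly yields the desired safety guarantee \eqref{eq:safety}. The main structural difference from Theorem~\ref{theorem-barrier-convexhull} is that each $B_i$ is supported only on the partition $\mathcal{B}_i$ induced by the policy $\pi$, and that under the fixed policy the switching of the belief dynamics is state-dependent rather than arbitrary: on $\mathcal{B}_i$, only the vector field $f_{a_i}$ is active, so only the inequality \eqref{equation:barrier-condition2xx} indexed by $i$ has to be invoked at any step whose current belief lies in $\mathcal{B}_i$.

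First I would verify the initial-set condition. Since $b_0 = p_0$ is a single point belonging to every partition's ``initial-time slice'' in the sense that \eqref{equation:barrier-condition11xx} is assumed for every $i$, multiplying by $\alpha_i$ and summing gives $B(0,b_0) = \sum_i \alpha_i B_i(0,b_0) < 0$, establishing \eqref{equation:barrier-condition11}. Next I would verify the unsafe-set condition: for any $b_\tau$ in the unsafe set there is a unique $j$ with $b_\tau \in \mathcal{B}_j$, so $b_\tau \in \mathcal{B}_u^s \cap \mathcal{B}_j$ and \eqref{equation:barrier-condition1xx} yields $B_j(\tau,b_\tau) > 0$; selecting the convex weights adapted to this partition (or, equivalently, interpreting $B$ piecewise as $B_{j}$ on $\mathcal{B}_{j}$) gives $B(\tau,b_\tau) > 0$, establishing \eqref{equation:barrier-condition1}.

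For the dynamics condition \eqref{equation:barrier-condition2}, I would fix any $b_{t-1} \in \mathcal{B}$ and let $i$ be the (unique) index such that $b_{t-1} \in \mathcal{B}_i$; the policy prescribes $a = a_i$, and \eqref{equation:barrier-condition2xx} applied with this $i$ gives $B_i(t, f_{a_i}(b_{t-1},z)) - B_i(t-1, b_{t-1}) \le 0$ for every observation $z \in Z$. Multiplying by $\alpha_i$ and summing across $i$ (where, because of the piecewise structure, at each $b_{t-1}$ only the index matching its partition contributes nontrivially) produces the inequality $B(t, f_{a_i}(b_{t-1},z)) - B(t-1, b_{t-1}) \le 0$, which is \eqref{equation:barrier-condition2} specialised to the action dictated by $\pi$. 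Having verified all three hypotheses, Theorem~\ref{theorem-barrier-discrete} rules out any trajectory of \eqref{equation:discretesystem1} starting at $b_0$ and reaching $\mathcal{B}_u^s$ at time $\tau$, which is exactly \eqref{eq:safety}.

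The main obstacle I anticipate is the handling of partition transitions: when $b_{t-1} \in \mathcal{B}_i$ but $b_t = f_{a_i}(b_{t-1},z) \in \mathcal{B}_j$ with $j \ne i$, the local certificate $B_i$ is what decreases along the one-step dynamics, yet the unsafe/terminal test at the next partition visited is phrased in terms of $B_j$. The convex-hull formulation sidesteps this by bundling all $B_i$ into a single function whose sign is controlled at the initial and unsafe sets and whose one-step increment is controlled on each partition by the corresponding inequality; the care needed is to make the bookkeeping with the weights $\{\alpha_i\}$ compatible with the piecewise domain of definition of each $B_i$, which is precisely what the assumption that $\{\mathcal{B}_i\}$ partitions $\mathcal{B}$ (guaranteed by \eqref{eq:policy}) is used for.
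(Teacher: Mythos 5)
Your proposal follows essentially the same route as the paper's own proof: take $B = \mathrm{co}\{B_i\}_{i=1}^{m}$, verify the three hypotheses \eqref{equation:barrier-condition1}--\eqref{equation:barrier-condition2} of Theorem~\ref{theorem-barrier-discrete} by weighting the local conditions \eqref{equation:barrier-condition1xx}--\eqref{equation:barrier-condition2xx} with convex coefficients $\alpha_i$ and summing (using that $\cup_{i=1}^{m}\left(\mathcal{B}_u^s\cap\mathcal{B}_i\right)=\mathcal{B}_u^s$ since the $\mathcal{B}_i$ cover $\mathcal{B}$), and then invoke that theorem to rule out trajectories from $b_0$ into the unsafe set at time $\tau$. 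The cross-partition transition issue you flag at the end is handled no more explicitly in the paper, whose proof likewise just sums the weighted per-partition decrease inequalities, so your argument matches the paper's approach and level of detail.
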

\begin{proof}
We demonstrate that if~\eqref{equation:barrier-condition1xx}-\eqref{equation:barrier-condition2xx} are satisfied then the barrier certificate $B = \text{co}\{B_i\}_{i =1}^m$ satisfies the conditions of Theorem~\ref{theorem-barrier-discrete}. For each $i \in \{1,2,\ldots,m\}$, we multiply both sides of~\eqref{equation:barrier-condition1xx} with a constant $\alpha_i$ such that $\sum_{i=1}^m \alpha_i =1$. We obtain
$$
\sum_{i=1}^m \alpha_i B_i(\tau,b_{\tau})  > 0, \quad \forall b_{\tau} \in \cup_{i=1}^m \left( \mathcal{B}_u^s \cap \mathcal{B}_i\right).
$$
Since the support of each $B_i$ is $\mathcal{B}_i$, $\cup_{i=1}^m  \mathcal{B}_i =\mathcal{B}$, and $\mathcal{B}_u^s \subset \mathcal{B}$, we have $\cup_{i=1}^m \left( \mathcal{B}_u^s \cap \mathcal{B}_i\right) = \mathcal{B}_u^s \cap \mathcal{B} =  \mathcal{B}_u^s $. Hence, $$B(\tau,b_{\tau}) = \text{co}\{B_i(\tau,b_{\tau})\}_{i =1}^N>0,~ \forall b_{\tau} \in \mathcal{B}_u^s.$$ Therefore, \eqref{equation:barrier-condition1} is satisfied with $B = \text{co}\{B_i\}_{i=1}^m$. Similarly, we can show that if~\eqref{equation:barrier-condition11xx} is satisfied, $B = \text{co}\{B_i\}_{i=1}^m$ satisfies~\eqref{equation:barrier-condition11}. Multiplying both sides of \eqref{equation:barrier-condition2xx}  with constants $\alpha_i$ such that $\sum_{i=1}^m \alpha_i =1$ and summing over them gives
\begin{multline*}
\sum_{i=1}^m \alpha_i \bigg( B_i\left(t,f_{a_i}(b_{t-1},z)\right) -  B_i(t-1,b_{t-1}) \bigg) \\
= \sum_{i=1}^m \alpha_i  B_i\left(t,f_{a_i}(b_{t-1},z)\right) -  \sum_{i=1}^m \alpha_i  B_i(t-1,b_{t-1})  \le 0, \\ \forall t \in \{1,2,\ldots,\tau\},~\forall z \in Z, ~\forall b \in \mathcal{B}.
\end{multline*}
which implies that~\eqref{equation:barrier-condition2} is satisfied for $B = \text{co}\{B_i\}_{i=1}^m$. Therefore, from Theorem~\ref{theorem-barrier-discrete}, we conclude that the safety requirement $\lambda$ is satisfied. 
\end{proof}

\begin{figure}[tbp] 
\begin{center} 
\includegraphics[width=8.5cm]{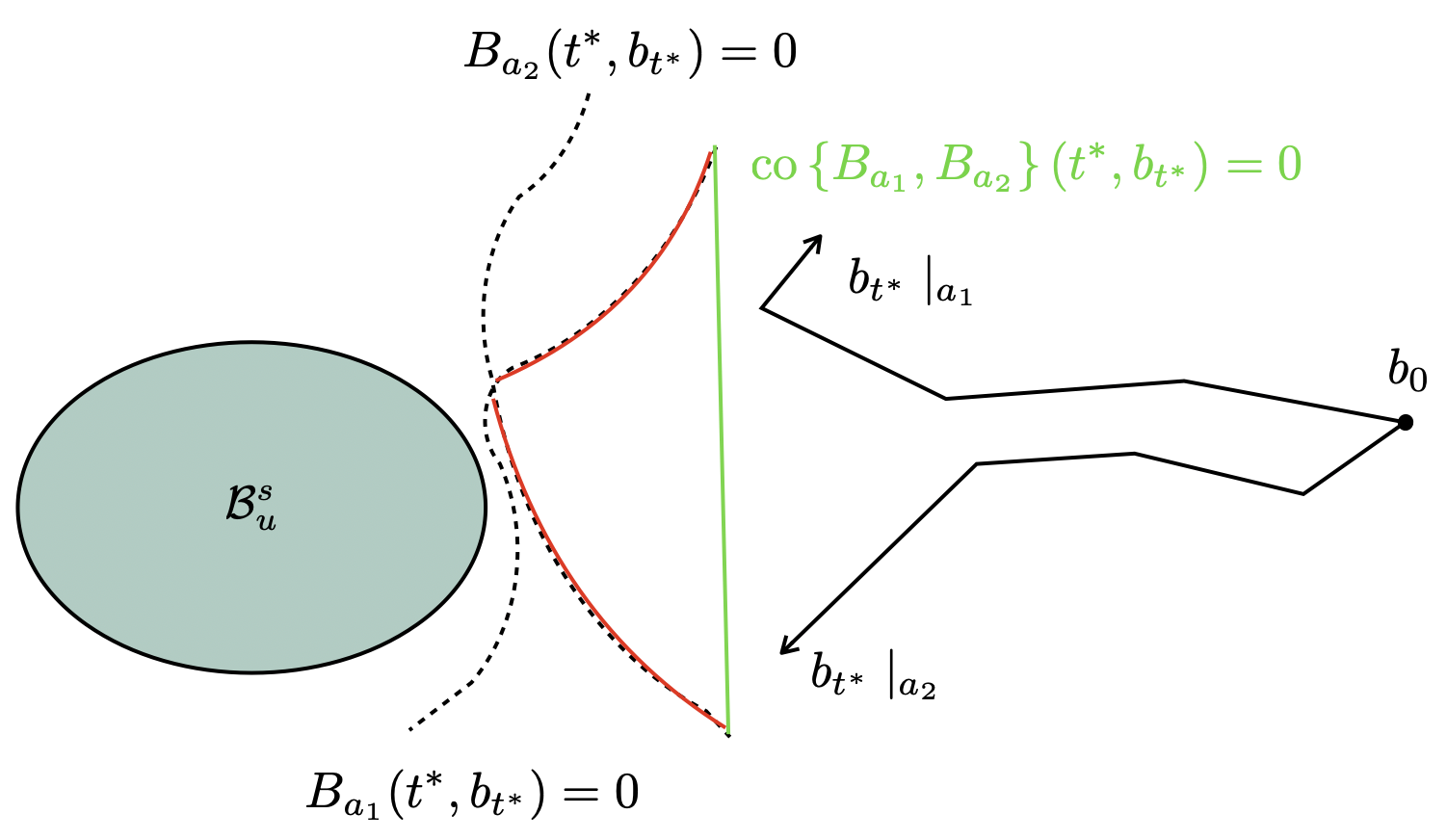}
\caption{Decomposing the barrier certificate computation for a POMDP with two actions $a_1$ and $a_2$: the zero-level sets of $B_{a_1}$ and $B_{a_2}$ at trial $\tau$ separate the evolutions of the hypothesis beliefs starting at $b_0$ from $\mathcal{B}_u^o$. The green line illustrate the zero-level set of the barrier certificate formed by taking the convex hull of $B_{a_1}$ and $B_{a_2}$.}
\label{figure2}
\end{center}
\end{figure}
%

We proposed two techniques for decomposing the construction of the barrier certificates and checking  given safety requirements. Our method relied on barrier certificates that take the form of the convex hull of a set of local barrier certificates~(see similar results in~\cite{8264626,2018arXiv180104072A}). Though the convex hull barrier certificate may introduce a level of conservatism, it is computationally easier to find (as will be discussed in more detail in the next section).  




\section{Computational Method based on \\ SOS Programming}

In this section, we present a set of SOS programs that can be used to find the Lyapunov functions used for reachability analysis and the barrier certificates used for safety and optimality verification. For a brief introduction to SOS programming and some related results, see Appendix A.

In order to formulate SOS programs, we need the problem data to be polynomial or rational. Fortunately,  the belief update equation~\eqref{equation:belief update} is a rational function in the belief states $b_t(q)$,~$q \in Q$
\begin{multline}\label{eq:belief-update-rational}
b_t(q') = \frac{M_a\left( b_{t-1}(q'),z \right)}{N_a\left( b_{t-1}(q'),z \right)} \\
= \frac{O(q',a_{t-1},z_{t})\sum_{q\in Q}T(q,a_{t-1},q')b_{t-1}(q)}{\sum_{q'\in Q}O(q',a_{t-1},z_{t})\sum_{q\in Q}T(q,a_{t-1},q')b_{t-1}(q)},
\end{multline}
where $M_a$ and $N_a$ are linear and positive functions of $b$.

 Moreover, the safe, and the optimality objectives described by \eqref{eq:safety} and \eqref{eq:optimality}, respectively, and the belief simplex are all  semi-algebraic sets, since they can be described by  polynomial inequalities/equalities. 
 
 Throughout this section, we also assume that the Lyapunov functions and the barrier certificates are  parametrized as polynomial functions of $b$. Hence, the level sets of those functions are semi-algebraic sets, as well.

At this stage, we are ready to present conditions based on SOS programs to find a Lyapunov function for POMDPs and thus  over-approximate the reachable belief space. 

\begin{cor}
Given the POMDP $\mathcal{P}=(Q,b_0,A,T,Z,O)$ as in Definition~1, consider the belief update dynamics~\eqref{eq:belief-update-rational}. If there exist a set of functions $V \in P[b]$ of degree $d$ and $p \in \Sigma[b]$, and a constant $c>0$ such that
\begin{multline}\label{eq:con1-sos} 
- N_a^d(b_{t-1},z)\bigg( V\left( \frac{M_a\left( b_{t-1},z \right)}{N_a\left( b_{t-1},z \right)}  \right) - V\left( b_{t-1}\right) \bigg)  \\ - p(b_{t-1}) \left( 1 - V(b_{t-1})     \right) - c \in \Sigma[b_{t-1}],
\end{multline}
for all $a \in A$ and $z \in Z$, and
\begin{equation}\label{eq:con2-sos}
1-V(b_0) \ge 0,
\end{equation}
 then $b_t \in \mathcal{V}= \{b \in \mathcal{B} \mid V(b)\le1\}$ for all $t \in \mathbb{Z}_{\ge 0}$, and, in particular, $\mathcal{R}(b_0) \subseteq  \mathcal{V}$.
\end{cor}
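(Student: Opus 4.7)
The plan is to reduce the corollary to Theorem~\ref{theorem:main} by showing that the SOS conditions~\eqref{eq:con1-sos} and~\eqref{eq:con2-sos} imply the pointwise inequalities~\eqref{eq:con1} and~\eqref{eq:con2}. The core observations are that (i) the belief update~\eqref{eq:belief-update-rational} is a rational function whose denominator $N_a(b_{t-1},z)$ is strictly positive whenever the update is well-defined (i.e., whenever the observation $z$ has nonzero probability under action $a$ from belief $b_{t-1}$), and (ii) multiplication by $N_a^d$ clears the denominator in $V(M_a/N_a)$ because $V$ is polynomial of degree $d$, so the quantity inside the SOS constraint is genuinely a polynomial in $b_{t-1}$ for each fixed $(a,z)$.

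First I would verify that~\eqref{eq:con1-sos} is a legitimate polynomial condition: since $V$ has degree $d$ and both $M_a(\cdot,z)$ and $N_a(\cdot,z)$ are polynomials (in fact linear) in $b_{t-1}$, each monomial of $V(M_a/N_a)$ has denominator dividing $N_a^d$. Hence $N_a^d \cdot V(M_a/N_a)$ is a polynomial, so the full left-hand side of~\eqref{eq:con1-sos} lies in $\mathcal{R}[b_{t-1}]$ and the SOS requirement is well posed.

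Second, I would apply an S-procedure-style argument. Because the left-hand side of~\eqref{eq:con1-sos} is SOS, it is nonnegative at every $b_{t-1}\in\mathbb{R}^{|Q|}$, and in particular on $\mathcal{V}\subset\mathcal{B}$. Restricting to $b_{t-1}\in\mathcal{V}$ gives $1-V(b_{t-1})\ge 0$, and since $p\in\Sigma[b]$ we also have $p(b_{t-1})\ge 0$, so
$$-N_a^d(b_{t-1},z)\,\bigl(V(f_a(b_{t-1},z))-V(b_{t-1})\bigr) \;\ge\; c + p(b_{t-1})\bigl(1-V(b_{t-1})\bigr) \;\ge\; c \;>\; 0.$$
Dividing through by $N_a^d>0$ yields the strict inequality $V(f_a(b_{t-1},z))-V(b_{t-1})<0$ for all $a\in A$, $z\in Z$, and $b_{t-1}\in\mathcal{V}$, which is precisely~\eqref{eq:con1}. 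The condition $V(b_0)\le 1$, i.e.\ $b_0\in\mathcal{V}$, follows directly from~\eqref{eq:con2-sos}, matching~\eqref{eq:con2}. Invoking Theorem~\ref{theorem:main} then gives $b_t\in\mathcal{V}$ for all $t\in\mathbb{Z}_{\ge 0}$ and $\mathcal{R}(b_0)\subseteq\mathcal{V}$.

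The main obstacle will be handling the sign of $N_a(b,z)$ cleanly: strictly speaking $N_a(b,z)=0$ corresponds to the observation $z$ having zero probability, in which case the Bayesian update is undefined and no trajectory constraint is needed, so the argument tacitly restricts attention to admissible $(a,z,b)$ with $N_a(b,z)>0$. A secondary subtlety worth emphasizing is the role of the constant $c>0$: it is exactly what upgrades the mere nonnegativity delivered by the SOS certificate to the strict decrease of $V$ along trajectories required by Theorem~\ref{theorem:main}, and without it one would only obtain $V(f_a(b,z))\le V(b)$.
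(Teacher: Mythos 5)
Your proposal is correct and follows essentially the same route as the paper: reduce to Theorem~\ref{theorem:main} by showing that the SOS certificate (with the multiplier $p$ on $1-V\ge 0$, the constant $c>0$ for strictness, and multiplication by $N_a^d$ to clear the rational denominator) implies the pointwise conditions~\eqref{eq:con1}--\eqref{eq:con2}. In fact you state the implication in the cleaner logical direction (SOS $\Rightarrow$ nonnegativity on $\mathcal{V}$, then divide by $N_a^d>0$) and explicitly flag the $N_a(b,z)=0$ degenerate case, both of which the paper's own argument only handles implicitly.
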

\begin{proof}
We show that if~\eqref{eq:con1-sos} and~\eqref{eq:con2-sos} hold, then \eqref{eq:con1} and \eqref{eq:con2} are, respectively, satisfied. Condition~\eqref{eq:con2-sos} implies that~\eqref{eq:con2} holds.
Furthermore, condition~\eqref{eq:con1} for system \eqref{eq:belief-update-rational} can be re-written~as  
\begin{equation*}
-\left(V\left(\frac{M_a\left( b_{t-1},z \right)}{N_a\left( b_{t-1},z \right)} \right) - V(b_{t-1}) \right)>0,\\ \forall a \in A, \forall z \in Z.
\end{equation*}
Since $\mathcal{V}=\{b \mid 1-V(b)\ge 0\}$ is a semi-algebraic set, we use Propositions~\ref{chesip} and~\ref{spos} in Appendix~A to obtain
\begin{multline*}
-\bigg(V\left(\frac{M_a\left( b_{t-1},z \right)}{N_a\left( b_{t-1},z \right)} \right) - V(b_{t-1}) \bigg) \\ -  p(b_{t-1})\left(1-V(b_{t-1}) \right) -c  \in \Sigma[b_{t-1}], \forall a \in A, \forall z \in Z.
\end{multline*}
for $p \in  \Sigma[b] $ and $c>0$.
Given that $N_a\left( b_{t-1},z \right)$ is a positive polynomial of degree one, we can relax the above inequality into a sum-of-squares condition given by
\begin{multline*}
-N_a^d(b_{t-1},z)\bigg(V\left(\frac{M_a\left( b_{t-1},z \right)}{N_a\left( b_{t-1},z \right)} \right) - V(b_{t-1}) \bigg) \\ -  p(b_{t-1})\left(1-V(b_{t-1}) \right) -c  \in \Sigma[b_{t-1}], \forall a \in A, \forall z \in Z.
\end{multline*}
Hence, if ~\eqref{eq:con1-sos} holds, then~\eqref{eq:con1}  is satisfied, as well. From Theorem~\ref{theorem:main}, we infer that  $b_t \in \mathcal{V}= \{b \in \mathcal{B} \mid V(b)\le1\}$ for all $t \in \mathbb{Z}_{\ge 0}$, and $ \mathcal{R}(b_0) \subseteq \mathcal{V}$.
\end{proof}

The set $\mathcal{V}$ provides an over-approximation of the reachable belief space $\mathcal{R}(b_0)$. Indeed, we can tighten the over-approximation by solving the following optimization problem:
\begin{eqnarray} \label{eq:optprob}
&\min\limits_{V \in P[b],\gamma>0,p\in \Sigma[b],c>0} \gamma  & \nonumber \\
&V(b_0) - \gamma \le 0,& \nonumber \\
&- V\left(  f_a\left(b,z\right)  \right) + V\left( b  \right) - p(b)\left( \gamma -V(b) \right)-c \in \Sigma[b],&
\end{eqnarray}
for all~$a \in A$ and $z\in Z$.
The above optimization is bilinear in the variables $V$, $\gamma$, and $p$. However, If $p$ is fixed, the problem becomes convex. Similarly, when $V$ and $\gamma$ are fixed, the optimization problem is convex in $p$. Optimization problems similar to~\eqref{eq:optprob} have been proposed in the literature for estimating the region-of-attraction of polynomial (continuous) systems~\cite{packard2010help}. 

We can computationally implement Theorem~\ref{thm:decomposition} by solving $|A|$ optimization problems
\begin{eqnarray} \label{eq:optprob-policy}
&\min\limits_{V_a \in P[b],\gamma>0,p_a\in \Sigma[b],c_a>0} \gamma_a  & \nonumber \\
&V_a(b_0) - \gamma \le 0,& \nonumber \\
&- V_a\left(  f_{a'}\left(b,z\right)  \right) + V_a\left( b  \right) - p_a(b)\left( \gamma_a -V_a(b) \right) \in \Sigma[b],&\nonumber \\
\end{eqnarray}
for all $z\in Z$ and $a' \in A$ and then computing the level set via $\max_{a\in A} V_a(b) \le \gamma$.

Furthermore, we can  implement Theorem~\ref{theorem2} by solving $|\Gamma|$ optimization problems
\begin{eqnarray} \label{eq:optprob-policy}
&\min\limits_{V_\alpha \in P[b],\gamma_\alpha>0,R_\alpha\in \Sigma[b]} \gamma_\alpha  & \nonumber \\
&V_\alpha(b_0) - \gamma_\alpha \le 0,&\nonumber \\
&- V_\alpha\left(  f_a\left(b,z\right)  \right) + V_\alpha\left( b  \right) - R_\alpha(b)\left( \gamma_\alpha -V_\alpha(b) \right) \in \Sigma[b],&  \nonumber \\
\end{eqnarray}
for all~$z\in Z$ in parallel and then computing~$\cup_{\alpha \in \Gamma} \{b \mid V_\alpha(b) \le \gamma_\alpha\}$.


Next, we  present SOS programs for finding the barrier certificates and check safety and optimality. We begin by an SOS formulation for Theorem~\ref{theorem-barrier-discrete}.

\begin{cor}\label{cor:SOS-Safety}
Given the POMDP $\mathcal{P}=(Q,b_0,A,T,Z,O)$ as in Definition~1, consider the belief update dynamics~\eqref{eq:belief-update-rational}. Given a safety constraint as described in \eqref{eq:safety} with a safety requirement $\lambda$, and a point in time $\tau \in \mathbb{Z}_{\ge 0}$, if there  exist polynomial functions $B \in P[t,b]$ of degree $d$ and   $p^f \in {\Sigma}[b]$, and constants $s_1,s_2>0$ such that
\begin{equation}\label{eq:setssos1}
B\left({\tau},b_{\tau}\right) +   p^f(b_{\tau}) \left( \sum_{q \in Q_u} b_{\tau}(q)  - \lambda   \right)- s_1 \in \Sigma \left[b_{\tau}\right],
\end{equation}
\begin{equation}\label{eq:setssos2}
-B\left(0,b_0\right)  - s_2 >0,
\end{equation}
and 
\begin{multline}\label{eq:setssos3}
- {N_a\left( b_{t-1} \right)}^d\bigg(B\left(t,\frac{M_a\left( b_{t-1},z \right)}{N_a\left( b_{t-1},z \right)} \right) - B(t-1,b_{t-1}) \bigg)   \\  \in \Sigma[t,b_{t-1}],  \forall  t \in \{1,2,\ldots,{\tau}\},~z \in Z,~a \in A,
\end{multline}
then there exists no solution of~\eqref{equation:discretesystem1} such that  $b_{\tau} \in \mathcal{B}_u^s$ and, hence, the safety requirement is satisfied.
\end{cor}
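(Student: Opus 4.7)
The plan is to show that each of the three SOS conditions \eqref{eq:setssos1}--\eqref{eq:setssos3} implies the corresponding hypothesis \eqref{equation:barrier-condition1}, \eqref{equation:barrier-condition11}, \eqref{equation:barrier-condition2} of Theorem~\ref{theorem-barrier-discrete}, and then invoke that theorem to conclude the safety requirement. The overall template mirrors the earlier SOS corollary for Lyapunov-based reachability, so much of that structure can be reused.

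Starting with \eqref{eq:setssos2}, this is a numerical strict inequality at the fixed initial belief and yields $B(0,b_0) < -s_2 < 0$ directly, which is \eqref{equation:barrier-condition11}. For \eqref{eq:setssos1}, the unsafe set $\mathcal{B}_u^s$ is the semi-algebraic set $\{\,b : \sum_{q\in Q_u}b(q) - \lambda > 0\,\}$, so I would invoke the Positivstellensatz/S-procedure reasoning used in the preceding SOS corollary for Lyapunov reachability. Since $p^f\in\Sigma[b]$ is nonnegative everywhere and the defining slack $\sum_{q \in Q_u} b_\tau(q) - \lambda$ is strictly positive on $\mathcal{B}_u^s$, the SOS membership in \eqref{eq:setssos1} certifies $B(\tau,b_\tau) \ge s_1 > 0$ on the unsafe set, i.e., \eqref{equation:barrier-condition1}.

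The main step is \eqref{eq:setssos3}, where one must handle the fact that the belief update $f_a(b,z) = M_a(b,z)/N_a(b,z)$ is \emph{rational}. Since $B$ has polynomial degree $d$ in $b$, substituting the rational argument produces a rational function whose denominator is exactly $N_a^d$. The polynomial $N_a$ is a sum of products of nonnegative transition probabilities, observation probabilities, and belief components, and is strictly positive on the interior of the belief simplex; hence $N_a^d > 0$. Multiplying $B(t,f_a(b_{t-1},z)) - B(t-1,b_{t-1})$ by $N_a^d$ clears denominators and preserves sign. The SOS condition $-N_a^d\cdot\bigl[B(t, f_a(b_{t-1},z)) - B(t-1,b_{t-1})\bigr] \in \Sigma$ then certifies non-positivity of this polynomial, and dividing back by the positive $N_a^d$ recovers exactly \eqref{equation:barrier-condition2} for every $t \in \{1,\ldots,\tau\}$, $a \in A$, and $z \in Z$.

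With all three hypotheses verified, Theorem~\ref{theorem-barrier-discrete} yields that no trajectory of \eqref{equation:discretesystem1} can reach $\mathcal{B}_u^s$ at time $\tau$, which is precisely the safety requirement~\eqref{eq:safety}. The main obstacle I anticipate is the careful bookkeeping of the denominator-clearing step: one must confirm that $N_a^d$ is the correct power to multiply through, that $N_a$ is strictly positive on the relevant part of the belief simplex (so that the inequality direction survives division), and that these verifications hold uniformly in $a$ and $z$. The S-procedure step is standard once the appendix propositions are applied, so the rational-function handling is where the real content of the proof lies.
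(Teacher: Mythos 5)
Your proposal follows the paper's proof step for step: \eqref{eq:setssos2} handles the initial-belief condition, an S-procedure/Positivstellensatz argument handles the unsafe-set condition, clearing the rational denominator by multiplying with ${N_a}^d$ (positive on the belief simplex) handles the difference condition, and Theorem~\ref{theorem-barrier-discrete} is then invoked. Structurally this is exactly the paper's argument, and your treatment of \eqref{eq:setssos2} and \eqref{eq:setssos3} (including the caveat that $N_a$ must be strictly positive on the relevant part of the simplex so the inequality survives division) is fine.

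The one step where your explicit reasoning does not go through as written is the passage from \eqref{eq:setssos1} to \eqref{equation:barrier-condition1}. Membership of $B(\tau,b)+p^f(b)\bigl(\sum_{q\in Q_u}b(q)-\lambda\bigr)-s_1$ in $\Sigma[b]$ gives the pointwise bound $B(\tau,b)\ge s_1-p^f(b)\bigl(\sum_{q\in Q_u}b(q)-\lambda\bigr)$; on $\mathcal{B}_u^s$ the slack is positive and $p^f\ge 0$, so the subtracted term is nonnegative and the bound gets \emph{weaker}, not stronger --- as written it certifies $B(\tau,\cdot)\ge s_1$ on the safe side $\{\sum_{q\in Q_u}b(q)\le\lambda\}$, not on the unsafe set. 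The correct certificate for positivity on $\{\sum_{q\in Q_u}b(q)-\lambda\ge 0\}$, which is how Proposition~\ref{chesip} actually applies, has the multiplier with a minus sign: $B(\tau,b)-p^f(b)\bigl(\sum_{q\in Q_u}b(q)-\lambda\bigr)-s_1\in\Sigma[b]$. The plus sign in \eqref{eq:setssos1} is best read as a sign typo in the corollary (the paper's own proof glosses over it by merely citing Propositions~\ref{chesip} and~\ref{spos}), but your sentence claiming that positivity of the slack on $\mathcal{B}_u^s$ makes the SOS membership certify $B\ge s_1$ there asserts the inference in the wrong direction; either correct the sign of the multiplier term or note explicitly that the stated condition must be amended for the conclusion to follow.
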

\begin{proof}
 SOS conditions~\eqref{eq:setssos1} and~\eqref{eq:setssos2} are a direct consequence of applying Propositions~\ref{chesip} and~\ref{spos} in Appendix~A to verify conditions \eqref{equation:barrier-condition1} and \eqref{equation:barrier-condition11}, respectively. Furthermore, condition~\eqref{equation:barrier-condition2} for system \eqref{eq:belief-update-rational} can be re-written~as  
\begin{multline*}
B\left(t,\frac{M_a\left( b_{t-1},z \right)}{N_a\left( b_{t-1},z \right)} \right) - B(t-1,b_{t-1})>0,\\ \forall a \in A,~\forall z \in Z.
\end{multline*}
Given that $N_a\left( b_{t-1}(q),z \right)$ is a positive polynomial of degree one, we can relax the above inequality into a SOS condition given by
\begin{multline}
- {N_a\left( b_{t-1},z  \right)}^d\bigg(B\left(t,\frac{M_a\left( b_{t-1},z  \right)}{N_a\left( b_{t-1},z  \right)} \right) - B\left(t-1,b_{t-1} \right) \bigg)\\
  \in \Sigma[t,b_{t-1}]. \nonumber
\end{multline}
Hence, if ~\eqref{eq:setssos3} holds, then~\eqref{equation:barrier-condition2}  is satisfied as well. From Theorem~\ref{theorem-barrier-discrete}, we infer that there is no $b_t(q)$ at time $\tau$ such that   $\sum_{q \in Q_u} b_{\tau}(q)  > \lambda$. Equivalently, the safety requirement is satisfied at time $\tau$. That is, $\sum_{q \in Q_u} b_{\tau}(q)  \le \lambda$.
\end{proof}

%
%

Similarly, we can formulate SOS feasibility conditions for checking the inequalities in Theorem~5.

\begin{cor}\label{cor:SOS-Safety2}
Given the POMDP $\mathcal{P}=(Q,b_0,A,T,Z,O)$ as in Definition~1, consider the belief update dynamics~\eqref{eq:belief-update-rational}. Given a safety constraint as described in \eqref{eq:safety} with a safety requirement $\lambda$, and a point in time $\tau \in \mathbb{Z}_{\ge 0}$, if there  exist polynomial functions $B_a \in P[t,b]$, $a \in A$, of degree $d$ and   $p^f_a \in {\Sigma}[b]$, $a \in A$, and constants $s^1_a,s^2_a>0$, $a \in A$, such that
\begin{multline}\label{eq:setssos12}
B_a\left({\tau},b_{\tau}\right) +   p^f_a(b_{\tau}) \left( \sum_{q \in Q_u} b_{\tau}(q)  - \lambda    \right) \\
- s^1_a \in \Sigma \left[b_{\tau}\right],~~a \in A,
\end{multline}
\begin{equation}\label{eq:setssos22}
-B_a\left(0,b_0\right)  - s_a^2 >0,~~a \in A,
\end{equation}
and 
\begin{multline}\label{eq:setssos32}
- {N_a\left( b_{t-1} \right)}^d\bigg(B_{a}\left(t,\frac{M_{a'}\left( b_{t-1},z \right)}{N_{a'}\left( b_{t-1},z \right)} \right) - B_{a}(t-1,b_{t-1}) \bigg)   \\  \in \Sigma[t,b_{t-1}],  \forall  t \in \{1,2,\ldots,{\tau}\},~z \in Z,~(a,a') \in A^2,
\end{multline}
then there exists no solution of~\eqref{eq:belief-update-rational} such that  $b_{\tau} \in \mathcal{B}_u^s$ and, hence, the safety requirement is satisfied.
\end{cor}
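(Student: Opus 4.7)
The plan is to reduce Corollary~2 to Theorem~\ref{theorem-barrier-convexhull} by showing that each of the three SOS inequalities~\eqref{eq:setssos12}, \eqref{eq:setssos22}, \eqref{eq:setssos32} implies the corresponding polynomial inequality~\eqref{equation:barrier-condition1x}, \eqref{equation:barrier-condition11x}, \eqref{equation:barrier-condition2x} pointwise on the relevant semi-algebraic set. This is exactly the pattern already used in Corollary~\ref{cor:SOS-Safety}, with the only new ingredient being that every condition is now indexed by an action (or pair of actions), so we simply have $|A|$ or $|A|^2$ copies of each argument to verify.

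First I would handle the unsafe-set condition~\eqref{eq:setssos12}. The unsafe set $\mathcal{B}_u^s$ is the semi-algebraic set $\{b \mid \sum_{q\in Q_u} b(q) - \lambda > 0\}$. Applying Propositions~\ref{chesip} and~\ref{spos} from Appendix~A (the Positivstellensatz-style certificates), the existence of an SOS multiplier $p^f_a$ and a strictly positive slack $s^1_a$ satisfying \eqref{eq:setssos12} certifies that $B_a(\tau,b_\tau) > 0$ for every $b_\tau \in \mathcal{B}_u^s$, for each $a\in A$ separately. This is~\eqref{equation:barrier-condition1x}. Next, \eqref{eq:setssos22} directly gives $B_a(0,b_0) < 0$ at the singleton $b_0 = p_0$ for each $a$, which is~\eqref{equation:barrier-condition11x}; no Positivstellensatz is needed because the initial set is a point.

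The main obstacle is~\eqref{eq:setssos32}, because the belief update is the rational map in~\eqref{eq:belief-update-rational} rather than a polynomial map, so the raw difference $B_{a}(t,f_{a'}(b_{t-1},z)) - B_a(t-1,b_{t-1})$ is a rational expression in $b_{t-1}$ and is not itself a polynomial that an SOS program can handle. The standard workaround, already used in Corollary~\ref{cor:SOS-Safety}, is to clear denominators by multiplying by $N_{a'}(b_{t-1},z)^d$, where $d=\deg B_a$ in the belief variables. Since $N_{a'}$ is an affine positive combination of the $b_{t-1}(q)$'s (a convex combination of transition and observation probabilities), it is nonnegative on the belief simplex, hence so is $N_{a'}^d$; therefore SOS-ness of the scaled expression forces the original rational expression to be nonpositive on $\mathcal{B}$. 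The only subtlety is that we write the expression in~\eqref{eq:setssos32} with $N_a$ but the relevant denominator is that of $f_{a'}$, namely $N_{a'}$; this is a typographical matter and the argument is the same. Having cleared the denominator, \eqref{eq:setssos32} implies $B_{a}(t,f_{a'}(b_{t-1},z)) - B_a(t-1,b_{t-1}) \le 0$ for all $(a,a')\in A^2$, all $z\in Z$, all admissible $b_{t-1}$, and all $t\in\{1,\dots,\tau\}$, which is exactly~\eqref{equation:barrier-condition2x}.

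Finally, I invoke Theorem~\ref{theorem-barrier-convexhull}: conditions~\eqref{equation:barrier-condition1x}--\eqref{equation:barrier-condition2x} hold, so the convex combination $B = \mathrm{co}\{B_a\}_{a\in A}$ is a barrier certificate in the sense of Theorem~\ref{theorem-barrier-discrete}, and no trajectory of~\eqref{eq:belief-update-rational} starting at $b_0$ can satisfy $b_\tau \in \mathcal{B}_u^s$ under any action sequence. Hence $\sum_{q\in Q_u} b_\tau(q) \le \lambda$, which is the safety requirement~\eqref{eq:safety}. The whole proof is a three-line bookkeeping reduction once the denominator-clearing trick is in place; the only thing worth flagging in the write-up is why positivity of $N_{a'}$ on the belief simplex lets us upgrade the SOS certificate into a pointwise sign condition on the original rational difference.
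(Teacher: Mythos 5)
Your proposal is correct and follows essentially the same route the paper intends: it is the exact analogue of the proof of Corollary~\ref{cor:SOS-Safety}, applying Propositions~\ref{chesip} and~\ref{spos} to each action-indexed condition and clearing the rational belief-update denominator by the positive polynomial $N_{a'}$ before invoking Theorem~\ref{theorem-barrier-convexhull} (rather than Theorem~\ref{theorem-barrier-discrete}) to obtain $B=\mathrm{co}\{B_a\}_{a\in A}$ and conclude safety. Your observation that the scaling factor in~\eqref{eq:setssos32} should be $N_{a'}$ (the denominator of $f_{a'}$) rather than $N_a$ is a correct reading of what the paper's notation intends.
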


We assume that a policy in the form of~\eqref{eq:policy} assigns actions to semi-algebraic partitions of  the hypothesis belief space $\mathcal{B}$ described as
\begin{equation}
\mathcal{B}_i =\left\{  b \in \mathcal{B} \mid g_i(b) \le 0      \right\},~~i \in \{1,2,\ldots,m\}.
\end{equation}
We then have the following SOS formulation for Theorem~6 using Positivstellensatz (Proposition~1 in Appendix~A).

\begin{cor}\label{cor:SOS-Safety3}
Given the POMDP $\mathcal{P}=(Q,b_0,A,T,Z,O)$ as in Definition~1, consider the belief update dynamics~\eqref{eq:belief-update-rational}. Given a safety constraint as described in \eqref{eq:safety} with a safety requirement $\lambda$, and a point in time $\tau \in \mathbb{Z}_{\ge 0}$, a  policy $\pi:\mathcal{B} \to A$ as described in~\eqref{eq:policy}, if there  exist polynomial functions $B_i \in P[t,b]$, $i \in \{1,2,\ldots,m\}$, of degree $d$, $p^{l_1}_i \in {\Sigma}[b]$, $i \in \{1,2,\ldots,m\}$, $p^{l_2}_i \in {\Sigma}[b]$, $i \in \{1,2,\ldots,m\}$, $p^{l_3}_i \in {\Sigma}[b]$, $i \in \{1,2,\ldots,m\}$, and   $p^f_i \in {\Sigma}[b]$, $i \in \{1,2,\ldots,m\}$, and constants $s^1_i,s^2_i>0$, $i \in \{1,2,\ldots,m\}$, such that
\begin{multline}\label{eq:setssos12}
B_i\left({\tau},b_{\tau}\right) +   p^f_i(b_{\tau}) \left(  \sum_{q \in Q_u} b_{\tau}(q)  - \lambda   \right) +p^{l_1}_i(b_{\tau})g_i(b_{\tau}) \\
- s^1_i \in \Sigma \left[b_{\tau}\right],~~i \in \{1,2,\ldots,m\},
\end{multline}
\begin{equation}\label{eq:setssos22}
-B_i\left(0,b_0\right) +p^{l_2}_i(p_0)g_i(p_0)  - s_i^2 >0,~~i \in \{1,2,\ldots,m\},
\end{equation}
and 
\begin{multline}\label{eq:setssos32}
- {N_a\left( b_{t-1} \right)}^d\bigg(B_{i}\left(t,\frac{M_a\left( b_{t-1},z \right)}{N_a\left( b_{t-1},z \right)} \right) - B_{i}(t-1,b_{t-1}) \bigg)   \\  +p^{l_3}_i(b_{t-1})g_i(b_{t-1})  \in \Sigma[t,b_{t-1}],  \forall  t \in \{1,2,\ldots,{\tau}\},\\~z \in Z,~a \in A,~~i \in \{1,2,\ldots,m\},
\end{multline}
then there exists no solution of~\eqref{eq:belief-update-rational} such that  $b_{\tau} \in \mathcal{B}_u^s$ and, hence, the safety requirement is satisfied.
\end{cor}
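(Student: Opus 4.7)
The overall plan is to show that the three SOS feasibility conditions \eqref{eq:setssos12}--\eqref{eq:setssos32} imply, for each partition index $i$, the three analytic hypotheses \eqref{equation:barrier-condition1xx}--\eqref{equation:barrier-condition2xx} of Theorem~\ref{theorem-barrier-policygiven}; once this implication is established, Theorem~\ref{theorem-barrier-policygiven} delivers the safety guarantee directly. The argument is a close parallel of the proofs of Corollaries~\ref{cor:SOS-Safety} and~\ref{cor:SOS-Safety2}, with one additional ingredient: because each local barrier certificate $B_i$ is only required to be valid on the partition $\mathcal{B}_i=\{b\in\mathcal{B}\mid g_i(b)\le 0\}$, the full Positivstellensatz from Appendix~A (rather than only its sum-of-squares corollary) is used, introducing the multipliers $p_i^{l_1}, p_i^{l_2}, p_i^{l_3}\in\Sigma[b]$ that carry the information about the semi-algebraic description of $\mathcal{B}_i$.

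I would carry out the three reductions in sequence. First, for the boundary condition on the unsafe set, I would observe that \eqref{eq:setssos12} is exactly the Positivstellensatz certificate that $B_i(\tau,b_\tau)>0$ holds on the semi-algebraic set $\mathcal{B}_u^s\cap\mathcal{B}_i$, cut out jointly by the inequality $\sum_{q\in Q_u}b_\tau(q)-\lambda>0$ and $g_i(b_\tau)\le 0$; this recovers \eqref{equation:barrier-condition1xx}. Second, for the initial condition, \eqref{eq:setssos22} evaluated at $b_0=p_0$, with the multiplier $p_i^{l_2}(p_0)g_i(p_0)$ handling membership in $\mathcal{B}_i$, yields \eqref{equation:barrier-condition11xx} by the same positivity argument, using the strict constant $s_i^2>0$. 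Third, for the decrease condition, I would reproduce the denominator-clearing trick of Corollary~\ref{cor:SOS-Safety}: since the belief update \eqref{eq:belief-update-rational} is rational with strictly positive denominator $N_a(b_{t-1},z)$, multiplying the desired polynomial inequality $B_i(t,f_a(b_{t-1},z))-B_i(t-1,b_{t-1})\le 0$ by $N_a(b_{t-1},z)^d$, where $d$ is the degree of $B_i$ in $b$, preserves the sign and produces a polynomial in $b_{t-1}$; the additional term $p_i^{l_3}(b_{t-1})g_i(b_{t-1})$ in \eqref{eq:setssos32} then serves as the Positivstellensatz multiplier that localizes the decrease requirement to $b_{t-1}\in\mathcal{B}_i$, which is precisely the set on which the policy selects the action $a_i$ appearing in \eqref{equation:barrier-condition2xx}.

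Once the three hypotheses of Theorem~\ref{theorem-barrier-policygiven} are verified on every partition, that theorem asserts the existence of a global barrier certificate $B=\mathrm{co}\{B_i\}_{i=1}^m$ for which the conditions of Theorem~\ref{theorem-barrier-discrete} hold, and hence the safety requirement $\sum_{q\in Q_u}b_\tau(q)\le\lambda$ is guaranteed; no new safety argument beyond invoking these two earlier results is needed.

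The main obstacle, and the step that deserves the most care, is the third one. The rational form of $f_a$ forces one to check that multiplication by $N_a(b_{t-1},z)^d$ indeed yields a polynomial of bounded degree on which the SOS ansatz is meaningful, and to verify that the denominator-clearing is exact rather than merely a relaxation in the sign sense. Beyond this, the remaining difficulty is bookkeeping: keeping track of which inequalities are strict versus non-strict, which multipliers are SOS versus plain polynomials, and ensuring the positive constants $s_i^1,s_i^2$ play the role of the strict margins required by Theorem~\ref{theorem-barrier-policygiven}. With these caveats handled, the result follows by a routine application of Propositions~1 and~2 of Appendix~A on each partition, combined with Theorem~\ref{theorem-barrier-policygiven}.
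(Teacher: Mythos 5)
Your proposal is correct and follows essentially the same route as the paper, which establishes this corollary exactly by reducing the three SOS conditions to the hypotheses of Theorem~\ref{theorem-barrier-policygiven} via Propositions~\ref{chesip} and~\ref{spos}, with the multipliers $p^{l_1}_i,p^{l_2}_i,p^{l_3}_i$ localizing each condition to the partition $\mathcal{B}_i=\{b\mid g_i(b)\le 0\}$ and the factor $N_a^d$ clearing the rational denominator as in Corollary~\ref{cor:SOS-Safety}. The only caveat, inherited from the paper's own statements of \eqref{eq:setssos1} and \eqref{eq:setssos12} rather than introduced by you, is the sign of the unsafe-set multiplier term $p^f_i(b_\tau)\left(\sum_{q\in Q_u}b_\tau(q)-\lambda\right)$: under the convention of Proposition~\ref{chesip} (constraints written as nonnegativity) this term should enter with a minus sign for the certificate to imply $B_i(\tau,b_\tau)>0$ on $\mathcal{B}_u^s\cap\mathcal{B}_i$, a point worth flagging if you tighten the bookkeeping you mention.
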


Checking whether optimality holds can also be cast into sum-of-squares programs. To this end, we assume the reward function is a polynomial (or can be approximated by a polynomial\footnote{This assumption is realistic, since the beliefs belong to a bounded set (a unit simplex) and by Stone-Weierstrass theorem any continuous function defined on a bounded domain can be uniformly approximated arbitrary close by a polynomial~\cite{Stone37}.}) in beliefs , i.e., $R \in P[b]$.

The following Corollary can be derived using similar arguments as the proof of Corollary~\ref{cor:SOS-Safety}.

\begin{cor}
Given the POMDP $\mathcal{P}=(Q,b_0,A,T,Z,O)$ as in Definition~1, consider the belief update dynamics~\eqref{eq:belief-update-rational}. Given a constant $\tau \in \mathbb{Z}_{\ge 0}$ and the optimality requirement~\eqref{eq:optimality}, if there  exist polynomial functions $\tilde{\gamma} \in P[t]$ characterizing the unsafe set~\eqref{eq:parametrized-unsafe-set}, $B \in P[t,b]$ with degree $d$,   $p^u_q \in {\Sigma}[b]$, $q \in Q_u$,  $p_i^0 \in {\Sigma}[b]$, $i = 1,2,\ldots, n_0$, and constants $s_1,s_2>0$ such that \eqref{eq:contstraint-on-gamma}, and \eqref{eq:setssos1}-\eqref{eq:setssos3} are satisfied, then  the optimality criterion~\eqref{eq:optimality} holds.
\end{cor}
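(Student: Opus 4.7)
The plan is to reduce the statement to Corollary~\ref{corollary-barrier-discrete:optimality}, which already asserts that the optimality criterion holds whenever a barrier certificate satisfies the non-SOS conditions \eqref{equation:barrier-condition1}--\eqref{equation:barrier-condition2} with the time-parametrized unsafe set $\mathcal{B}_u^o$ in place of $\mathcal{B}_u^s$. Thus it suffices to argue that the SOS conditions \eqref{eq:setssos1}--\eqref{eq:setssos3}, when read with $\mathcal{B}_u^o$ substituted for $\mathcal{B}_u^s$, imply those non-SOS conditions. Since $R \in P[b]$ by assumption (with the Stone--Weierstrass justification given in the footnote), the expected reward $r(b,a) = \sum_{q\in Q} b(q)R(q,a)$ is polynomial (in fact linear) in $b$, so the set $\mathcal{B}_u^o = \{(t,b) \mid r(b_t,a_t) - \tilde{\gamma}(t) > 0\}$ is semi-algebraic in $(t,b)$, and the polynomial $\tilde{\gamma} \in P[t]$ plays the role of $\lambda$ in the safety construction.

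First, I would carry the proof of Corollary~\ref{cor:SOS-Safety} through verbatim, but with the defining inequality $\sum_{q\in Q_u} b_\tau(q) - \lambda \ge 0$ of the unsafe set replaced by $r(b_t,a_t) - \tilde{\gamma}(t) \ge 0$ and with the corresponding multiplier $p^u_q$ (resp.\ the initial-set multipliers $p_i^0$) playing the role of the Positivstellensatz certificate of nonnegativity on $\mathcal{B}_u^o$ (resp.\ on $\mathcal{B}_0$). Applying Propositions~\ref{chesip} and \ref{spos} from Appendix~A to \eqref{eq:setssos1} then yields \eqref{equation:barrier-condition1} for $\mathcal{B}_u^o$; \eqref{eq:setssos2} directly delivers \eqref{equation:barrier-condition11}; and the denominator-cleared SOS condition \eqref{eq:setssos3}, after dividing by the strictly positive polynomial $N_a(b_{t-1},z)^d$, delivers \eqref{equation:barrier-condition2}.

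With the three hypotheses of Corollary~\ref{corollary-barrier-discrete:optimality} in hand, I would invoke that corollary to conclude that no belief trajectory starting at $b_0$ enters $\mathcal{B}_u^o$, i.e., $r(b_t,a_t) \le \tilde{\gamma}(t)$ for every $t \in \{0,1,\ldots,\tau\}$. Summing this pointwise bound over $t$ and combining with the constraint \eqref{eq:contstraint-on-gamma} on $\tilde{\gamma}$ yields
\begin{equation*}
\sum_{s=0}^{\tau} r(b_s,a_s) \;\le\; \sum_{s=0}^{\tau} \tilde{\gamma}(s) \;\le\; \gamma,
\end{equation*}
which is exactly the optimality requirement \eqref{eq:optimality}.

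The main obstacle I anticipate is bookkeeping rather than mathematical: one must be careful that $t$ is treated as an indeterminate in the SOS programs (so that $B \in P[t,b]$ and $\tilde{\gamma} \in P[t]$ interact correctly with $p^u_q$ and the $p_i^0$ certificates), and that the denominator-clearing trick in \eqref{eq:setssos3} is still valid when the unsafe set is time-dependent. Both of these are inherited essentially unchanged from Corollary~\ref{cor:SOS-Safety}, so the bulk of the proof is a careful re-labeling of that argument followed by a one-line appeal to Corollary~\ref{corollary-barrier-discrete:optimality} and a summation over time.
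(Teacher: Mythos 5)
Your proposal is correct and follows exactly the route the paper intends: the paper gives no explicit proof, stating only that the corollary ``can be derived using similar arguments as the proof of Corollary~\ref{cor:SOS-Safety},'' which is precisely your strategy of transferring the Positivstellensatz and denominator-clearing steps to the time-parametrized unsafe set $\mathcal{B}_u^o$ and then invoking Corollary~\ref{corollary-barrier-discrete:optimality} together with \eqref{eq:contstraint-on-gamma}. Your added bookkeeping about $r(b,a)$ being polynomial in $b$ and $t$ entering as an indeterminate is a faithful (and slightly more careful) elaboration of the same argument.
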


\section{Case Studies}

In this section, we illustrate the proposed method using two examples.{ In the first example, we study a toy example of ad scheduling scenario in social media using a POMDP model. We illustrate how the over-approximation method based on local Lyapunov functions introduced in Section V can be used to predict the outcome of an ad scheduling policy.  In the second example, we compare the teaching performance of two recently proposed machine teaching algorithms (myopic  and Ada-L)~\cite{chen18adaptive} and show the safety verification method presented in Section VI can be used to show that myopic teaching has poor performance while Ada-L has guaranteed convergence.}

\subsection{Interactive Advertisement Scheduling}
In personalized live social media, a website's revenue depends on   user engagement, which is closely related to the user's interest to the streamed online contents \cite{krishnamurthy2018multiple}. User interest evolves probabilistically with the online contents and the ads inserted \cite{yadati2014cavva}. Furthermore, user interest to online content is not directly observable but can only be inferred from the number of views or the number of likes during a given time interval \cite{lehmann2012models}. The broadcaster's objective is to estimate user interest and schedule relevant  ads from time to time to maintain high user interest and engagement.  

Such interactive ads scheduling can be represented as a sequential decision-making problem under partial observability \cite{krishnamurthy2018multiple}, where a POMDP $\mathcal{P}=(Q,p_0,A,T,Z,O)$ can describe how the user's interest evolves. Each state $q\in Q$ represents different levels of interest where $p_0$ denotes the initial  interest distribution. The actions available to the broadcaster are to insert different ads or continue with the live stream. The transition $T(q,a,q')$ describes how a user's interest may evolve  depending on the ads inserted or the content of live stream when no ads is scheduled. The observation $Z$ denotes the number of likes in a given time interval, which can be abstracted into a finite number of categories. The observation function $O(q,z)$ denotes the probability of observing $z$ likes when the user's interest level is at $q$. The transition and observation functions can be obtained from data \cite{krishnamurthy2018multiple}.

We consider a concrete POMDP model $\mathcal{P}$ where the topology of the underlying MDP is as shown in Figure~\ref{fig:ad}. The user interest has three levels --- low, medium, and high. The initial probability distribution of the user interest is $p_0(q_1)=1$, i.e., the user initially has low interest. The actions available are $A=\{a_0,a_1\}$, where $a_0$ denotes no ads and $a_1$ denotes scheduling ads. The transition probabilities $T$ with respect to each action are as shown in (\ref{eq:ad}). 

\begin{align}\label{eq:ad}
T_{a_0}&=\begin{bmatrix}
   0.8, &0.2 ,&0.1 \\
   0.1,&0.7,&0.2 \\
   0.1,&0.1,&0.7
\end{bmatrix},
T_{a_1}=\begin{bmatrix}
0.5, &0.3, &0.2 \\
0.3, &0.6,    &0.2 \\
0.2, &0.1,  &0.6
\end{bmatrix}.  
\end{align}

The observations are number of likes in a give time interval and can be represented by a Poisson process \cite{krishnamurthy2018multiple}. To have a finite number of observations, we have $Z=\{z_1,z_2,z_3\}$ to denote low, medium, and high number of likes with the thresholds $\gamma_1$ and $\gamma_2$. The observation functions are then defined as follows.
\begin{align}\label{eq:obs}\nonumber
O(q_i,z_1)&=\sum_{j\leq\gamma_1} \frac{\lambda_i^j\exp(-\lambda_i)}{j!},\\ 
O(q_i,z_2)&=\sum_{\gamma_1<j\leq\gamma_2} \frac{\lambda_i^j\exp(-\lambda_i)}{j!},\\\nonumber
O(q_i,z_3)&=\sum_{j>\gamma_2}\frac{\lambda_i^j\exp(-\lambda_i)}{j!},
\end{align}
where $\lambda_i$ represents the rate of Poisson process when the interest is in state $q_i$. In this particular example, $\lambda_1=2$, $\lambda_2=4$, $\lambda_3=6$, $\gamma_1=3$ and $\gamma_2=6$. Then from (\ref{eq:obs}), it can be found that
\begin{align}\nonumber
O(q_1,z_1)&=0.8571, O(q_1,z_2)= 0.1383, O(q_1,z_3)= 0.0046,\\ \nonumber
O(q_2,z_1)&= 0.4335, O(q_2,z_2)= 0.4559, O(q_2,z_3)=0.1106,\\\nonumber
O(q_3,z_1)&= 0.1512, O(q_3,z_2)= 0.4551, O(q_3,z_3)=0.3937.
\end{align}

\begin{figure}[t]
	\centering	
	\begin{tikzpicture}[shorten >=1pt,node distance=4cm,on grid,auto, thick,scale=.55, every node/.style={transform shape}]
	\node[state] (q_1)   {$q_1$};
	\node[state] (q_2) [below left = 6cm of q_1] {$q_2$};
	\node[state] (q_3) [below right =6cm of q_1] {$q_3$};
	\node[text width=3cm] at (0,0) {low};
	\node[text width=3cm] at (-3.5,-5) {Medium};
	\node[text width=3cm] at (5.5,-5) {High};
	\path[->]
	(q_1) edge [pos=0.5, loop, above=0.1] node  {} (q_1)
	(q_1) edge [pos=0.5, bend right, above=0.5,sloped] node {} (q_2)
	(q_1) edge [pos=0.5, bend left, above=0.5,sloped] node {} (q_3)
	
	(q_2) edge [pos=0.5, loop left, left=0.1] node  {} (q_2)
	(q_2) edge [pos=0.5, bend right, above=0.5,sloped] node {} (q_1)
	(q_2) edge [pos=0.5, above=0.5] node {} (q_3)
	
	(q_3) edge [pos=0.5, loop right, right=0.1] node {} (q_3)
	(q_3) edge [pos=0.5, bend left, above=0.5,sloped] node {} (q_1)
	(q_3) edge [pos=0.5, bend left, above=0.5,sloped] node {} (q_2)
	;
	
	\end{tikzpicture}
	\caption{The underlying MDP in Example I}
	\label{fig:ad}
\end{figure}
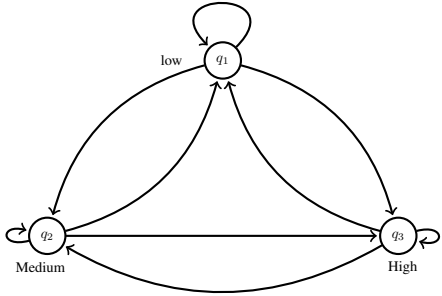

In this example, we study the reachable belief space given the following policy
\begin{equation}
\pi(b) = \begin{cases}
a_0, & b \in \mathcal{B}_1,\\
a_1, & b \in \mathcal{B}_2,
\end{cases}
\end{equation}
where $\mathcal{B}_1 = \{ b \mid b(q_1)+b(q_2) \le 0.5\}$ and $\mathcal{B}_2 = \{ b \mid b(q_1)+b(q_2) > 0.5\}$. That is, if the probability of the user's interest being low or medium is less than $0.5$ show ads, and  if the probability of the user's interest being high is greater than $0.5$ show no ads (see Figure~\ref{fig:partition}).

  In this example, we assume we are not sure about the interest level of the user. Therefore, we assign equal initial probability to all three states ($b_0(q_1)=b_0(q_2)=b_0(q_3)=1/3$). We are then interested in checking whether starting from $b_0$, given the actions and observations, we can reach the state with high interest $q_3$ ($b_t(q_3)=1$). To this end, we calculate the reachable belief spaces using the sub-level sets of Lyapunov functions using optimization problem~(34) and compare with simulations of the POMDP belief state evolutions. We fix the degree $d$ of variables $V_a$, $a=a_0,a_1$ and solve the optimization problem. The polynomial variable $p_a =\sum_{q\in\{q_1,q_2\}} b(q)^d$. Note that $b \in \mathcal{B}$, so $p_a$ is a positive polynomial.

\begin{figure}[tpb]
	\centering
	\begin{tikzpicture}[scale=.9]
	\draw[very thick,->] (0,0) -- (5.2,0) node[right] {$b(q_1)$};
	\draw[very thick,->] (0,0) -- (0,5.2) node[above] {$b(q_2)$};
	
	\draw (0,0.05) -- (0,-0.05) node[below] {\footnotesize 0};
	\draw (1,0.05) -- (1,-0.05) node[below] {\footnotesize 0.2};
	\draw (2,0.05) -- (2,-0.05) node[below] {\footnotesize 0.4};
	\draw (3,0.05) -- (3,-0.05) node[below] {\footnotesize 0.6};
	\draw (4,0.05) -- (4,-0.05) node[below] {\footnotesize 0.8};
	\draw (5,0.05) -- (5,-0.05) node[below] {\footnotesize1};
	
	\draw (-0.05,1) -- (0.05,1) node[left] {\footnotesize 0.2};
	\draw (-0.05,2) -- (0.05,2) node[left] {\footnotesize 0.4};
	\draw (-0.05,3) -- (0.05,3) node[left] {\footnotesize 0.6};
	\draw (-0.05,4) -- (0.05,4) node[left] {\footnotesize 0.8};
	\draw (-0.05,5) -- (0.05,5) node[left] {\footnotesize 1};
	
	\draw (0,5) -- node[below,sloped] {} (5,0);
	
	\draw[dashed] (0,2.5) -- node[below,sloped] {} (2.5,0);
	
	
	
	
	
	
	\draw[->] (1.35,1.35) parabola (3,3) ;
	
	
	\node[] at  (3.5,0.6) {$\mathcal{B}_2$};
	
	\node[] at   (0.6,0.6) {$\mathcal{B}_1$};
	
	\node[] at   (3.5,3.2) {$b(q_1)+b(q_2)=0.5$};
	
	\fill[blue!50!cyan,opacity=0.3] (0,5) -- (5,0) -- (2.5,0) -- (0,2.5) -- cycle;
	
	\fill[red!50!cyan,opacity=0.3] (0,0) -- (2.5,0) -- (0,2.5) -- cycle;
%
%
%
%
%
%
	
	
	\end{tikzpicture}
	\caption{The belief space in the interactive advertising example. The policy assigns actions $a_0$ (no ads) and $a_1$ (show ads) to different regions of the belief space $\mathcal{B}_1$ (high interest) and $\mathcal{B}_2$ (low to medium interest), respectively.}
	\label{fig:partition}
\end{figure}
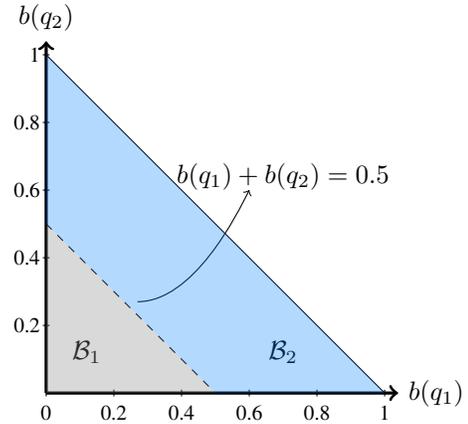

\begin{figure}[t]
	\centering	
\includegraphics[width=6cm]{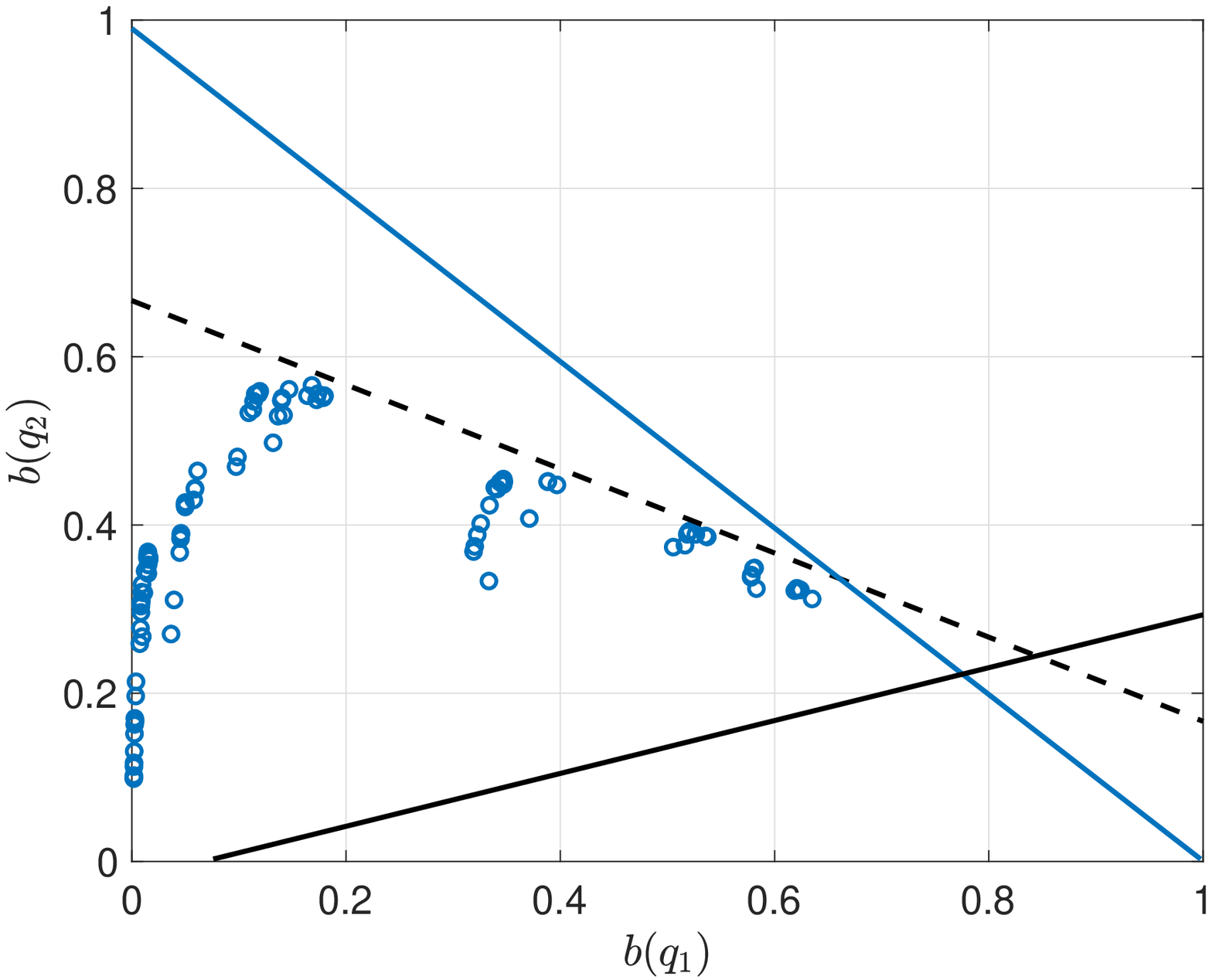}
\includegraphics[width=6cm]{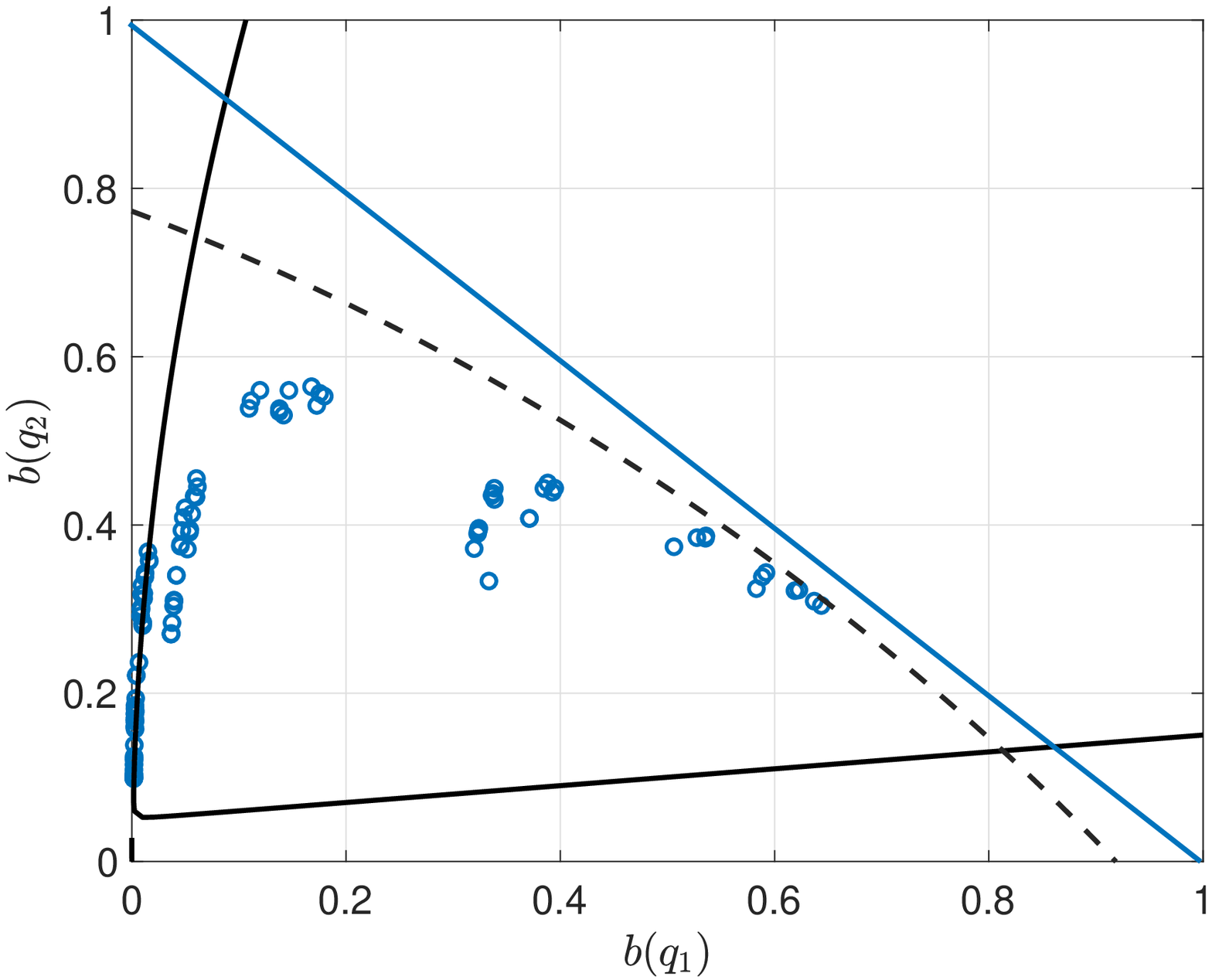}
	\caption{The solutions of the POMDP (blue circles) in Case Study A and the approximated reachable belief space (contained between the black lines) for local Lyapunov functions of degree one (top) and degree three (below). The solid blue line demonstrates the boundary of the belief simplex.}
	\label{fig:reachable}
\end{figure}

Figure~\ref{fig:reachable} illustrates the obtained results for local Lyapunov functions of degree one $d=1$ and degree three $d=3$. As it can be seen in the figure, simulations of the POMDP shows that the state $b(q_3)=1$ (i.e., $b(q_1)=b(q_2=0)$) is not reachable even with the policy $\pi$ due to the uncertainty of the problem and partial observation in the interest levels of the users. The over-approximation of the reachable belief space by local Lyapunov functions of degree one
 fails to capture this property; whereas, the over-approximation using local Lyapunov functions of degree three shows that the state $b(q_1)=b(q_2)=0$ is not reachable for the POMDP. Therefore, one can check reachability properties of POMDPs without the need for solving them explicitly by considering a sufficiently large degree of the local Lyapunov functions.

\subsection{Machine Teaching}

In recent years, the need for machine learning systems has far exceeded the supply of machine learning experts.  Machine teaching, that is, algorithms designed to enable machines to teach other machines or humans, have thus received attention~\cite{simard2017machine}. Formally, machine teaching characterizes the algorithmic framework of designing an optimal training set for learning a target hypothesis~\cite{zhu2015machine}. The target hypothesis is \emph{given} to the algorithm and the goal of the teacher (machine) is to generate a minimal sequence of training examples such that the target hypothesis can be learned by a learner (human or another machine) from a finite set of hypotheses. 

We represent machine teaching as a sequential decision making under uncertainty scenario. To this end, we use a POMDP representation for the learner based on the state-dependent teaching model (for more details see~\cite{AWCYT19}). The POMDP model can be described as follows.

\begin{defi}[Learning POMDP~\cite{AWCYT19}]
  The  learning POMDP $\mathcal{P}_L$ is a tuple $(\mathcal{H}, p_0, \mathcal{X}, T, \mathcal{Y}, O)$ 
\begin{itemize}
\item the hypotheses set $\mathcal{H}$ is a finite set of hidden states;
\item $p_0$ is the probability of having an initial hypothesis $h_0 \in \mathcal{H}$;
\item the set of examples $\mathcal{X}$ constitute the finite set of actions;
\item the set of labels $\mathcal{Y}$ constitute the finite set of observations;
\item $T$ describes the transitions from one hypothesis (state) to another characterized by a preference function;
\item $O(y_t \mid h_t,x_t)$ determines the probability of seeing a label (observation) given the current hypothesis and example.
\end{itemize}
\end{defi}

The learner starts with an initial hypothesis $h_0$, i.e., $b(h_0)=1$. Over a sequence of trials, in which an example $x_t \in \mathcal{X}$ is shown and the learner receives a corresponding observation $y_t \in \mathcal{Y}$, the learner updates its belief in the hypotheses.

We assume all  hypotheses are uniformly distributed, or equivalently, $O(y _t \mid h_t, x_t)$ is binary. Moreover,  the transition function $T(h, x_{t-1}, h')$ defines a \emph{uniform} distribution: the learner only goes to the hypotheses $h'$ that are the most preferred; hence, $T$ induces a uniform distribution over the most preferred hypothesis according to a preference function $\sigma$.

The learner starts with an initial hypothesis $h_0$ and over a sequence of trials, in which an example $x_t\in \mathcal{X}$ is shown and the learner receives a corresponding observation $y_t \in \mathcal{Y}$, develops a belief in the new hypothesis $h$.

\begin{figure*}[!t]
  \centering
  \begin{subfigure}[b]{1.0\textwidth}
    \includegraphics[trim={8pt 5pt 8pt 5pt},height=.11\textwidth]{./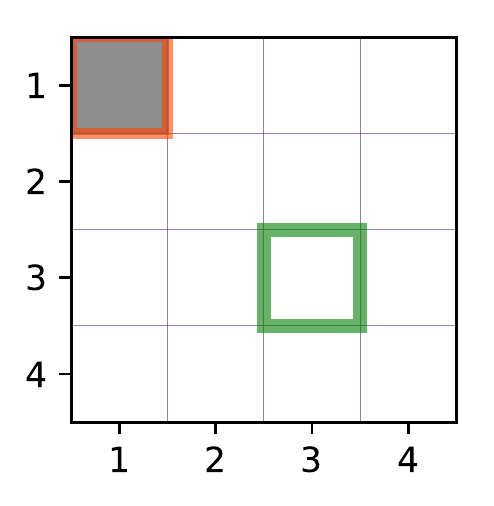}
    \includegraphics[trim={8pt 5pt 8pt 5pt},height=.11\textwidth]{./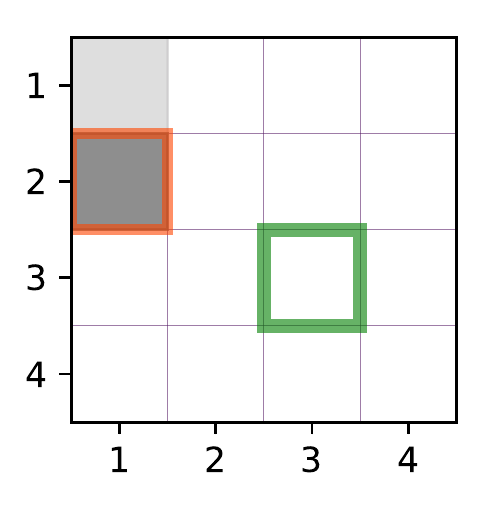}
    \includegraphics[trim={8pt 5pt 8pt 5pt},height=.11\textwidth]{./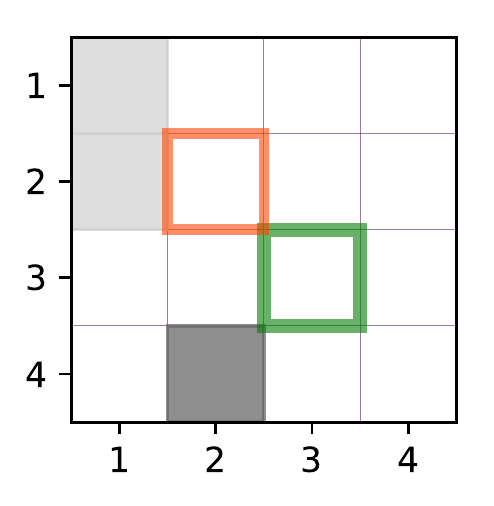}
    \includegraphics[trim={8pt 5pt 8pt 5pt},height=.11\textwidth]{./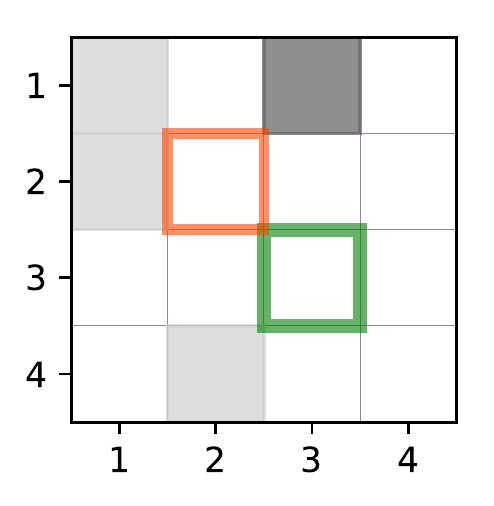}
    \includegraphics[trim={8pt 5pt 8pt 5pt},height=.11\textwidth]{./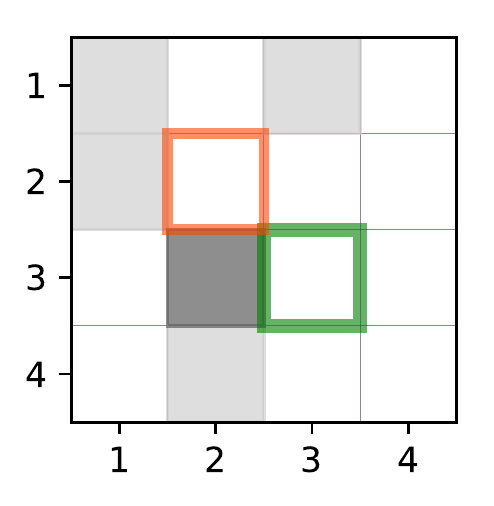}
    \includegraphics[trim={8pt 5pt 8pt 5pt},height=.11\textwidth]{./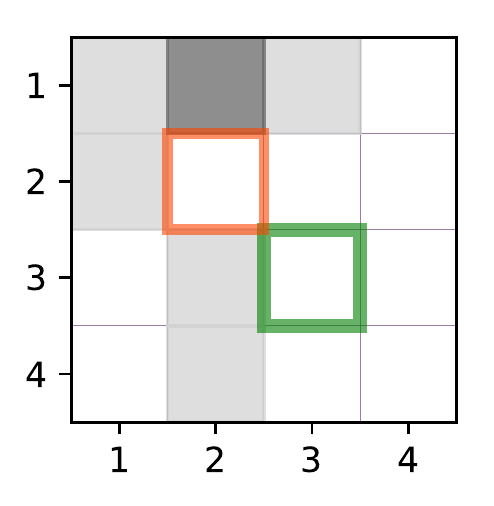}
    \includegraphics[trim={8pt 5pt 8pt 5pt},height=.11\textwidth]{./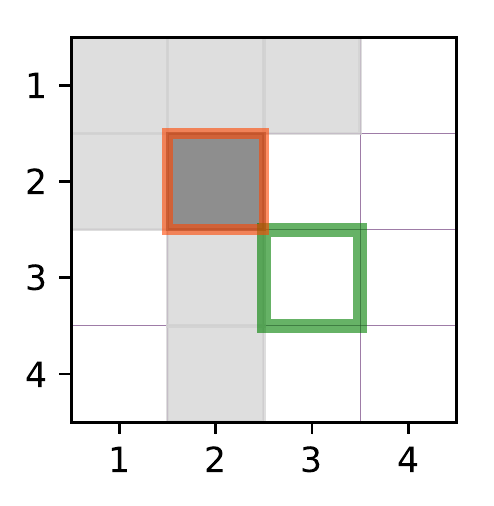}
    \includegraphics[trim={8pt 5pt 8pt 5pt},height=.11\textwidth]{./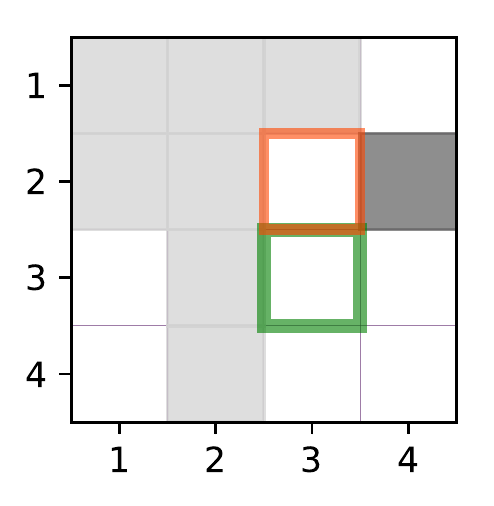}
    \includegraphics[trim={8pt 5pt 8pt 5pt},height=.11\textwidth]{./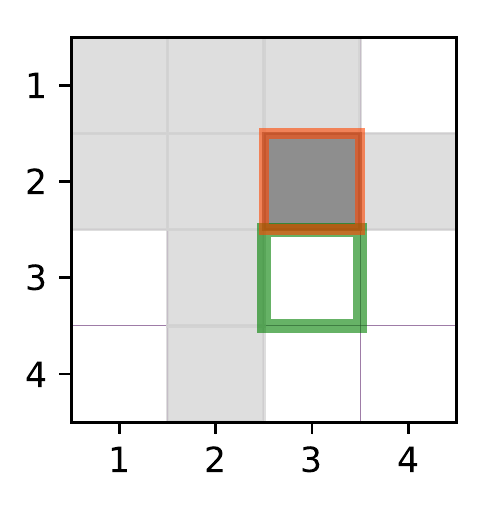}
    \caption{Myopic}
    \label{fig:app:lattice:myopic}
  \end{subfigure}
  \vspace{.005cm}
  
  \begin{subfigure}[b]{1.0\textwidth}
    \includegraphics[trim={8pt 5pt 8pt 5pt},height=.11\textwidth]{./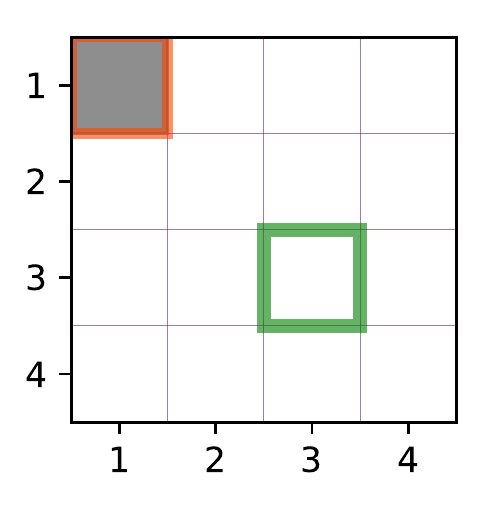}
    \includegraphics[trim={8pt 5pt 8pt 5pt},height=.11\textwidth]{./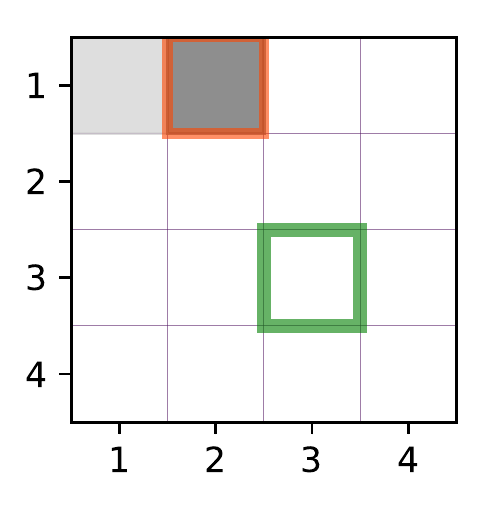}
    \includegraphics[trim={8pt 5pt 8pt 5pt},height=.11\textwidth]{./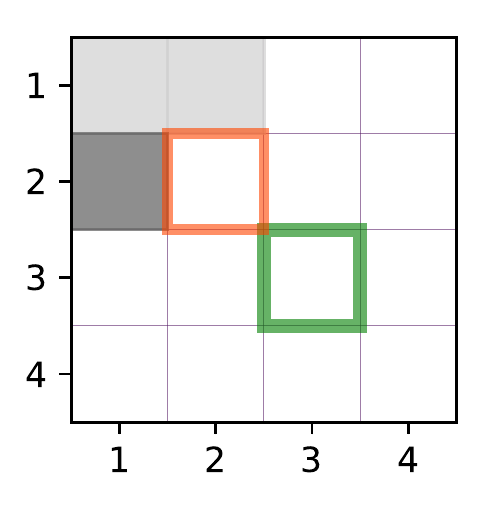}
    \includegraphics[trim={8pt 5pt 8pt 5pt},height=.11\textwidth]{./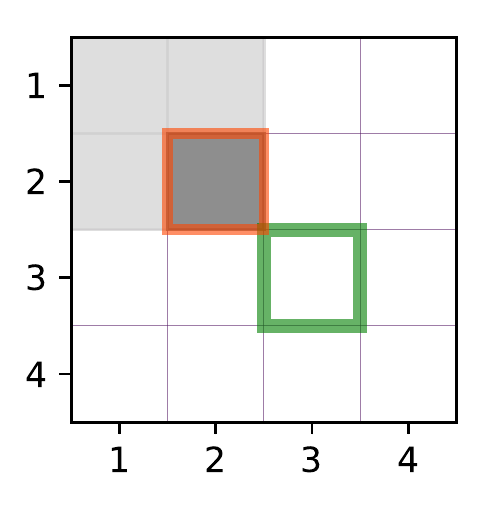}
    \includegraphics[trim={8pt 5pt 8pt 5pt},height=.11\textwidth]{./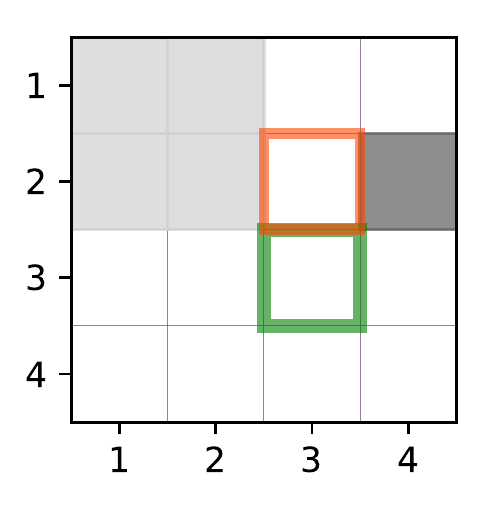}
    \includegraphics[trim={8pt 5pt 8pt 5pt},height=.11\textwidth]{./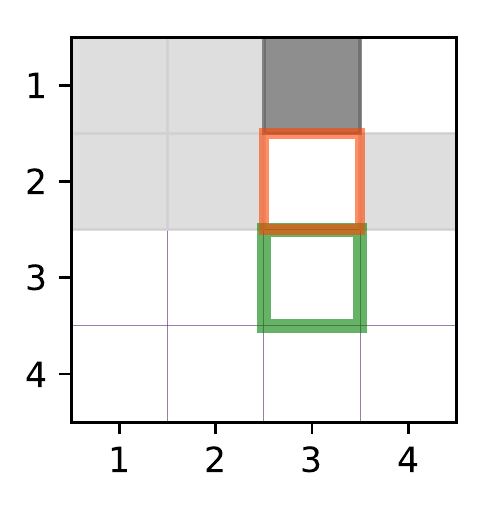}
    \includegraphics[trim={8pt 5pt 8pt 5pt},height=.11\textwidth]{./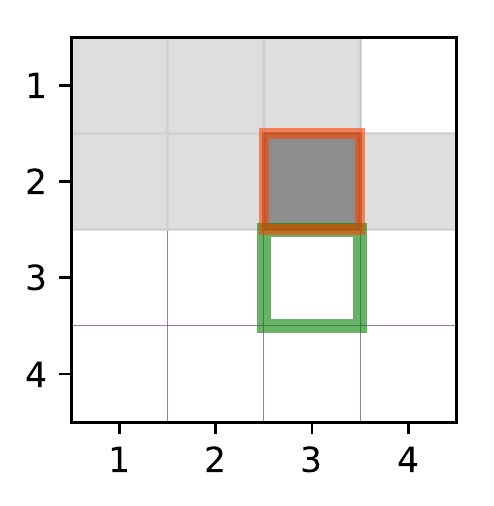}
    \caption{Ada-L}
    \label{fig:app:lattice:adal}
  \end{subfigure}
  \caption{Teaching sequences generated by Myopic and Ada-L on a $4\times 4$ lattice, with $h_0=(1,1), h^*=(3,3)$. The learner's current hypothesis is marked by orange, and the target is marked by green. The dark gray square represents the teaching example at the current time step, while light gray squares represent the previous teaching examples.}
  \label{fig:app:lattice_2}
\end{figure*}

\begin{figure*}[!t]
  \centering
  \begin{subfigure}[b]{1\textwidth}
    \includegraphics[trim={8pt 5pt 8pt 5pt},height=.11\textwidth]{./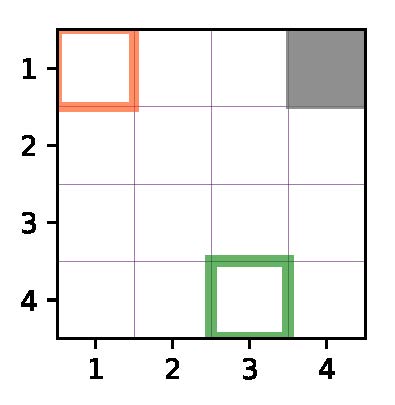}
    \includegraphics[trim={8pt 5pt 8pt 5pt},height=.11\textwidth]{./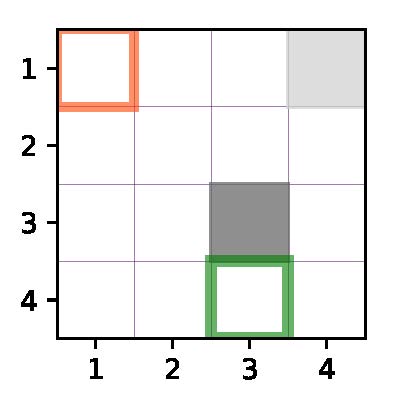}
    \includegraphics[trim={8pt 5pt 8pt 5pt},height=.11\textwidth]{./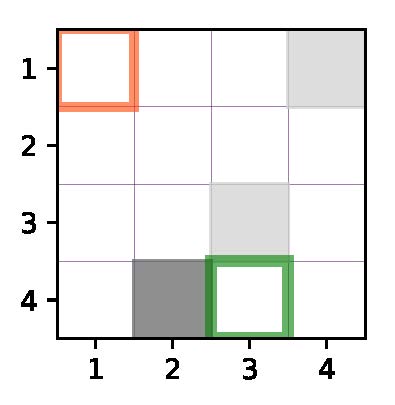}
    \includegraphics[trim={8pt 5pt 8pt 5pt},height=.11\textwidth]{./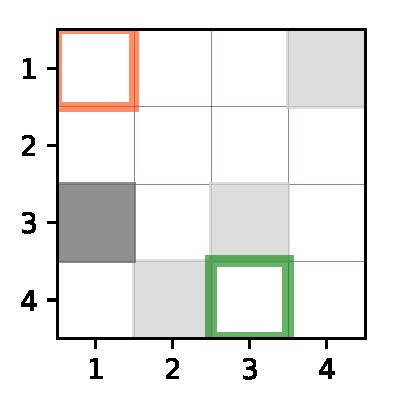}
    \includegraphics[trim={8pt 5pt 8pt 5pt},height=.11\textwidth]{./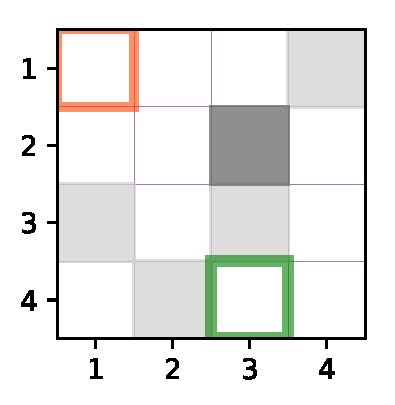}
    \includegraphics[trim={8pt 5pt 8pt 5pt},height=.11\textwidth]{./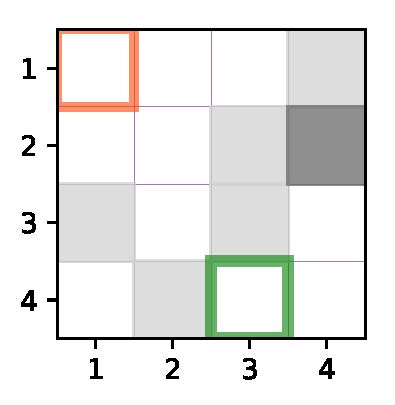}
    \includegraphics[trim={8pt 5pt 8pt 5pt},height=.11\textwidth]{./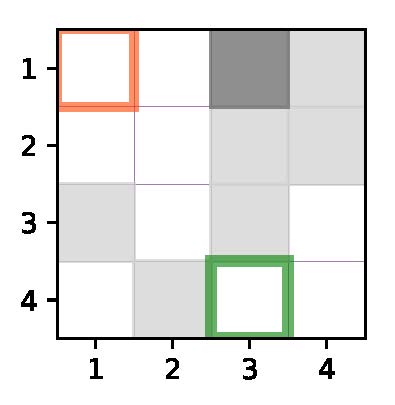}
    \includegraphics[trim={8pt 5pt 8pt 5pt},height=.11\textwidth]{./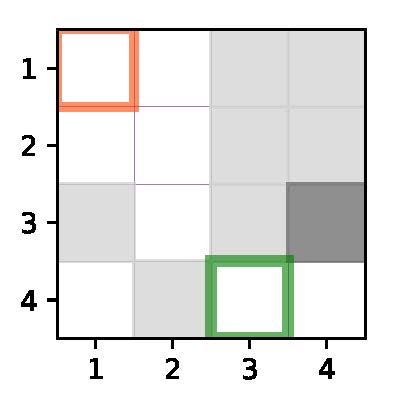}
    \includegraphics[trim={8pt 5pt 8pt 5pt},height=.11\textwidth]{./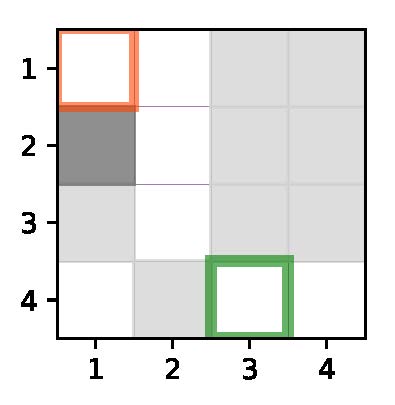}
    \caption{Myopic}
    \label{fig:app:lattice:myopic}
  \end{subfigure}
  \begin{subfigure}[b]{1\textwidth}
    \includegraphics[trim={8pt 5pt 8pt 5pt},height=.11\textwidth]{./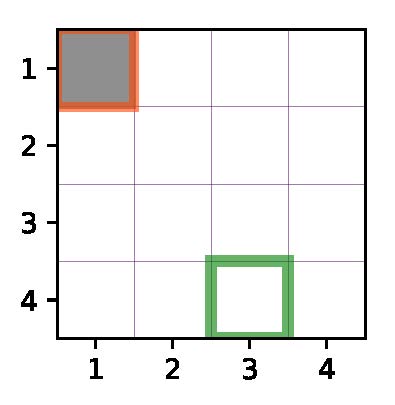}
    \includegraphics[trim={8pt 5pt 8pt 5pt},height=.11\textwidth]{./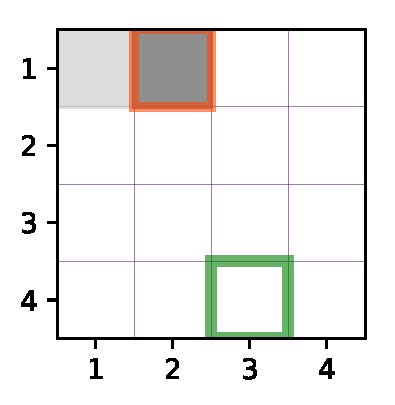}
    \includegraphics[trim={8pt 5pt 8pt 5pt},height=.11\textwidth]{./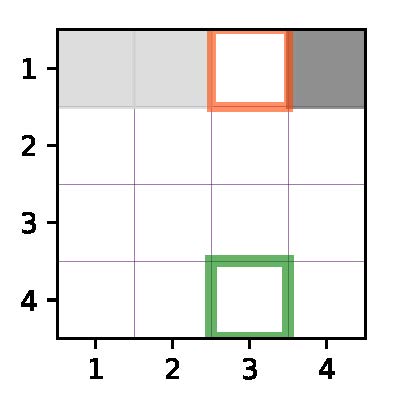}
    \includegraphics[trim={8pt 5pt 8pt 5pt},height=.11\textwidth]{./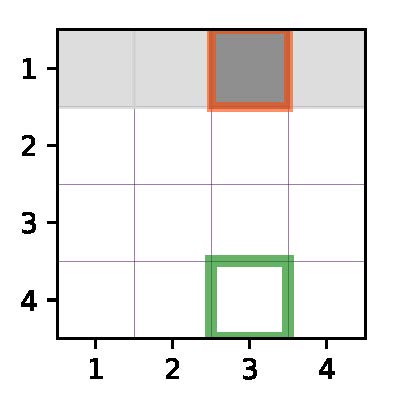}
    \includegraphics[trim={8pt 5pt 8pt 5pt},height=.11\textwidth]{./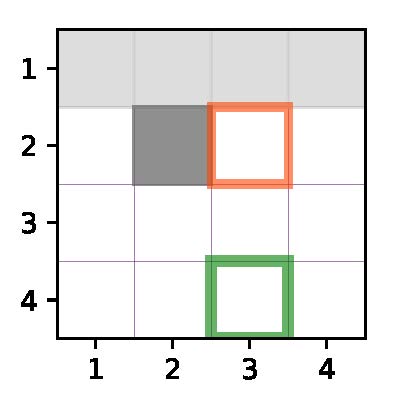}
    \includegraphics[trim={8pt 5pt 8pt 5pt},height=.11\textwidth]{./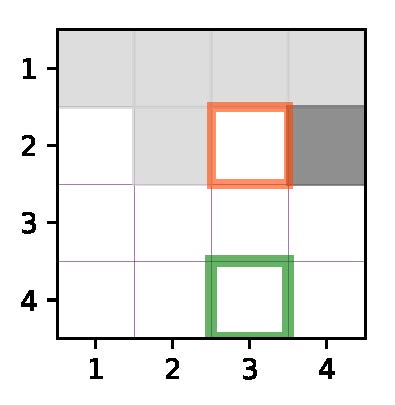}
    \includegraphics[trim={8pt 5pt 8pt 5pt},height=.11\textwidth]{./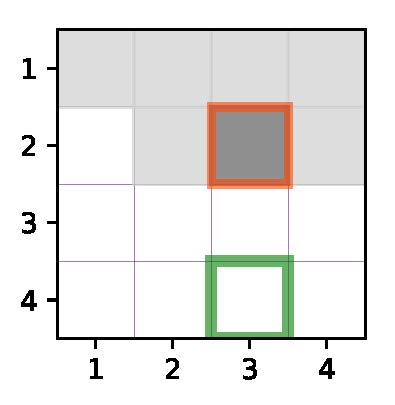}
    \includegraphics[trim={8pt 5pt 8pt 5pt},height=.11\textwidth]{./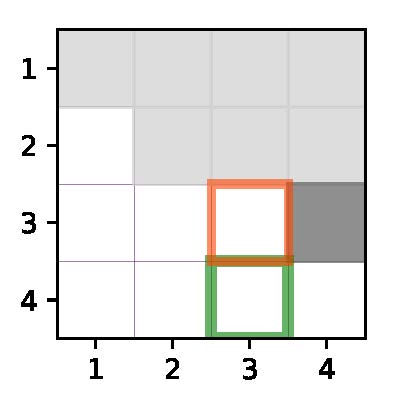}
    \includegraphics[trim={8pt 5pt 8pt 5pt},height=.11\textwidth]{./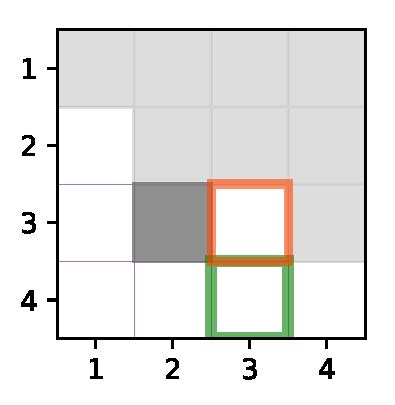}
        \caption{Ada-L}
    \label{fig:app:lattice:adal}
  \end{subfigure}
  \caption{Teaching sequences generated by  Myopic and Ada-L algorithms on a $4\times 4$ lattice, with $h_0=(1,1), h^*=(3,4)$. The learner's initial hypothesis is marked by orange, and the target is marked by green. The dark gray square represents the teaching example at the current time step, while light gray squares represent the previous teaching examples.}
  \label{fig:app:lattice_3}
\end{figure*}

In order to illustrate the proposed framework, we consider a toy scenario, where the teacher aims to teach/steer a human
learner to reach a goal state in a physical environment. Each hypothesis/node corresponds to some
unexplored territory, and there exists an example which flags the territory as explored. The learner
prefers local moves, and if all neighboring territories are explored, the learner jumps to the next
closest one.

The physical environment is characterized by a $4\times 4$ lattice corresponding to $16$ hypotheses. The teacher has $16$ choices of locations on the lattice to show to the student as examples. The student then receives two labels based on its answer $y \in \{-1,1\}$, $-1$ for wrong and $1$ for right.  The preference function $\sigma(h';h)$ is given by the minimum distance between hypotheses described by~$\ell_1(h';h)$.

In this example, we compare two teaching algorithms in the adaptive setting, where the teacher observes the learner's hypothesis at each iteration. The Myopic algorithm is a greedy approach which, at each iteration, picks the teaching example such that after observing the label, the worst-case rank of the target hypothesis in the learner's resulting version space is the smallest. The Ada-L algorithm aims to teach the learner some intermediate hypothesis at each iteration, i.e., it aims to direct the learner to transit to a hypothesis that is ``closer'' to the target hypothesis. For more details of the algorithms please refer to~\cite{chen18adaptive}.

Each algorithm provides a set of policies for which we seek to find the minimum number of trials $\tau$ such that the belief in the target hypothesis $h^*$ is greater than a teaching performance~$\lambda$
$$
b_{\tau} (h^*) \ge \lambda.
$$
To this end, we minimize the number of trials $\tau$ such that (43)-(45) are satisfied. We start by a large number of trials ($16$ in this case) and decrease it until no barrier certificate can be found to verify the teaching performance.  We fix the degree of variables $B_i$, $p^{l_1}_i$, $p^{l_2}_i$, $p^{l_3}_i$, and   $p^f_i \in {\Sigma}[b]$, $i \in \{1,2,\ldots,N\}$ in Corollary~5 to~$2$ and search for the certificates. In order to check the SOS conditions, we use diagonally-dominant-SOS (DSOS) relaxations of the SOS programs implemented through the SPOTless tool~\cite{spot} (for more details see~\cite{ahmadi2017dsos,8263706}).

The results on finding the minimum number of trials $\tau$ for which the teaching performance is satisfied were as follows. 
\subsubsection{$h_0=(1,1)$ and $h^*=(3,3)$}  For the Myopic algorithm, for $\lambda=0.8$, we could  find a certificate for $t=11$. Changing the the teaching performance to $\lambda=0.6$ yielded certificates for only $\tau=9$. In contrast, for the Ada-L algorithm, we could obtain $\tau=6$ assuring teaching performance $\lambda=0.8$ and $\tau=7$ guaranteeing teaching performance $\lambda=0.9$.

\subsubsection{$h_0=(1,1)$ and $h^*=(3,4)$}  For the Myopic algorithm, we could not find any certificate for $\lambda=0.8$. Changing the the teaching performance to $\lambda=0.55$ yielded certificates for only $\tau=15$. On the other hand, for the Ada-L algorithm, we obtained $\tau=9$ assuring teaching performance $\lambda=0.8$ and $\tau=10$ assuring teaching performance $\lambda=0.9$.



The results can also be corroborated from simulations. As can be see in Figures~\ref{fig:app:lattice_2} and~\ref{fig:app:lattice_3}, the Myopic algorithm  perform poorly on simple teaching tasks as compared to the Ada-L algorithm.

\section{CONCLUSIONS AND FUTURE WORK} \label{sec:conclusions}

We studied the reachability, optimality, and safety verification problems of POMDPs using methods from control theory, such as Lyapunov functions and barrier certificates. We showed how in several cases our calculations can be decomposed and computationally implemented as a set of polynomial optimization problems. We used two examples to illustrate our proposed methodology.

Future work will include a synthesis algorithm for POMDPs. To this end, the literature on designing switching sequences of hybrid systems seems relevant~\cite{stellato2017second}. Such synthesis techniques are aimed at providing guarantees for reachability, optimality, and safety in POMDPs without explicitly or approximately solving them.

Our method was implemented as the solution to a set of sum-of-squares programs. The incorporation of special sparsity patterns in sum-of-squares programs has been shown to be very effective in reducing the computational burden of solving them. We will explore the possibility of using recent results on chordal sparsity in sum-of-squares programs~\cite{zheng2018sparse,ahmadi2017improving} in our computational formulation. In another vein, we can circumvent the computational bottleneck of barrier certificates by considering instead barrier functions~\cite{ames2016control}. This is simply because a closed-form expression for a barrier function can be derived from the definition of the safe set (see recent results on using barrier functions for safe policy synthesis of multi-agent POMDPs~\cite{ahmadi2019safe}). 

\section*{ACKNOWLEDGMENTS}
M. Ahmadi appreciates the stimulating discussions with and the teaching sequences provided  by  Dr.~Yuxin Chen  at the California Institute of Technology.

\appendix
\label{appendix}
\subsection{Sum-of-Squares Polynomials} \label{app:SOS}
A polynomial $p(x)$ is a sum-of-squares polynomial if $\exists p_i(x) \in P[x]$, $i \in \{1, \ldots, n_d\}$ such that $p(x) = \sum_i p_i^2(x)$. Hence $p(x)$ is clearly non-negative. A set of polynomials $p_i$ is called \emph{SOS decomposition} of $p(x)$. The converse does not hold in general, that is, there exist non-negative polynomials which do not have an SOS decomposition~\cite{Par00}.  The computation of SOS decompositions, can be cast as an SDP (see~\cite{Par00,choi1995sums,CTVG99}). The Theorem below proves that, in sets satisfying a property stronger than compactness, any positive polynomial can be expressed as a combination of sum-of-squares polynomials and polynomials describing the set.  

For a set of polynomials $\bar{g} = \{g_1(x), \ldots, g_m(x)\}$, $m \in \mathbb{Z}_{\ge1}$, the \emph{quadratic module} generated by $m$ is 
\begin{equation}
M(\bar{g}):= \left\lbrace \sigma_0 +\sum_{i = 1}^{m} \sigma_i g_i | \sigma_i \in \Sigma[x]\right\rbrace.
\end{equation}
A quadratic module $M\in P[x]$ is said \emph{archimedean} if $\exists N \in \mathbb{Z}_{\ge1}$ such that $N - |x|^2 \in M.$ An archimedian set is always compact~\cite{NS08}. At this point, we recall the following result~\cite[Theorem 2.14]{Las09}.

\begin{thm}[Putinar Positivstellensatz]
\label{theorem:Psatz}
Suppose the quadratic module $M(\bar{g})$ is archimedian. Then for every $f \in P[x]$, $$f>0~\forall~x\in \{x | g_1(x)\geq 0, \ldots, g_m(x)\geq 0 \} \Rightarrow f \in (\bar{g}).$$
\end{thm}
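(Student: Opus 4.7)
The plan is to prove this by contradiction via a real character extraction argument. I would suppose, for contradiction, that $f > 0$ on $K := \{x \in \mathbb{R}^n : g_i(x) \geq 0 \text{ for } i = 1,\ldots,m\}$ but $f \notin M(\bar{g})$, and then construct a point $p \in K$ at which $f(p) \leq 0$, yielding the desired contradiction.

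First, I would enlarge $M(\bar{g})$ by adjoining $-f$, forming the quadratic module $N := \{\sigma_0 + \sum_{i=1}^m \sigma_i g_i + \tau(-f) : \sigma_0,\ldots,\sigma_m,\tau \in \Sigma[x]\}$. A short algebraic argument shows that $N$ is proper, i.e., $-1 \notin N$. Indeed, if $-1 = m' - \tau f$ with $m' \in M(\bar{g})$ and $\tau \in \Sigma[x]$, then $\tau f = m' + 1 \in M(\bar{g})$; the archimedean hypothesis lets us dominate $\tau$ by a positive constant modulo $M(\bar{g})$, and a standard rescaling step then extracts $f$ itself from $M(\bar{g})$, contradicting the standing assumption.

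Next, I would apply Zorn's lemma to the inclusion-ordered set of proper quadratic modules containing $N$ to obtain a maximal proper quadratic module $P \supseteq N$. Since $M(\bar{g}) \subseteq P$, $P$ remains archimedean. The crucial algebraic input is the Kadison--Dubois--Stone representation theorem: every maximal proper archimedean quadratic module in the commutative $\mathbb{R}$-algebra $\mathbb{R}[x]$ is the preimage of $\mathbb{R}_{\geq 0}$ under a unique $\mathbb{R}$-algebra homomorphism $\varphi : \mathbb{R}[x] \to \mathbb{R}$. Because such a homomorphism is determined by its values on the generators $x_1,\ldots,x_n$, it must be evaluation at the point $p := (\varphi(x_1),\ldots,\varphi(x_n)) \in \mathbb{R}^n$.

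The contradiction is then immediate: each $g_i \in M(\bar{g}) \subseteq P$ gives $g_i(p) = \varphi(g_i) \geq 0$, so $p \in K$, while $-f \in N \subseteq P$ gives $\varphi(-f) \geq 0$, i.e., $f(p) \leq 0$, contradicting $f > 0$ on $K$. I expect the Kadison--Dubois step to be the main obstacle, since it requires passing to the quotient of $\mathbb{R}[x]$ by the support ideal of $P$, verifying that the induced ordering on this quotient extends to a total order that remains archimedean, and then invoking the classical fact that every archimedean totally ordered field embeds uniquely into $\mathbb{R}$. The archimedean hypothesis is essential precisely at this step, which is why the weaker Schmüdgen-style product representation is all one obtains without it.
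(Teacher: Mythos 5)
First, note that the paper does not prove this theorem at all: it is recalled as a known result and cited to Lasserre, so there is no in-paper argument to compare against; your proposal has to stand on its own. Its overall architecture is the recognized algebraic route (Zorn's lemma plus a Kadison--Dubois/Jacobi representation step producing a point evaluation, then the evaluation contradiction), and the final step --- $g_i\in P$ forces the point into $K$, $-f\in P$ forces $f(p)\le 0$ --- is fine once the earlier steps are granted.

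The genuine gap is the properness of $N=M(\bar g)+\Sigma[x]\cdot(-f)$. You reduce $-1\in N$ to $\tau f\in 1+M(\bar g)$ with $\tau\in\Sigma[x]$ and claim a ``standard rescaling'' then extracts $f\in M(\bar g)$. No such short argument is available: a quadratic module is closed only under addition and under multiplication by sums of squares, so you cannot divide by $\tau$ or invert it modulo $M(\bar g)$; bounding $\tau$ above and below by constants modulo $M(\bar g)$ (which the archimedean property does give) does not let you pass from $\tau f\in 1+M(\bar g)$ to $f\in M(\bar g)$. In fact the implication ``$\tau f\in 1+M(\bar g)$, $M(\bar g)$ archimedean $\Rightarrow f\in M(\bar g)$'' is essentially a special case of Putinar's theorem itself (from $\tau f=1+m$ one gets $f>0$ on $K$ pointwise, so the implication follows from the theorem you are proving), and invoking it without proof makes the argument circular at its crux; it is precisely the step that the hard machinery exists to handle. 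The standard fixes are: (i) the algebraic proof does not adjoin $-f$; it takes $Q\supseteq M(\bar g)$ maximal subject to $f\notin Q$, where properness is automatic because any quadratic module containing $-1$ contains every $b=\left(\tfrac{b+1}{2}\right)^2-\left(\tfrac{b-1}{2}\right)^2$, and only afterwards derives $-f\in Q$ from the totality $Q\cup(-Q)=\mathbb{R}[x]$, which is where the archimedean hypothesis is really used (Jacobi's lemma); or (ii) the analytic proof separates $f$ from $M(\bar g)$ by a normalized linear functional using $1$ as an order unit, and then applies the GNS construction and spectral theory to produce the point of $K$. If you rework the properness/exclusion step along either line, the rest of your outline (including the quotient-by-support and archimedean-field-embedding subtleties you already flag in the Kadison--Dubois step) goes through.
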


The subsequent proposition formalizes the problem of constrained positivity of polynomials which is a direct result of applying Positivstellensatz.
\begin{prop}[\cite{chesi2010lmi}] \label{chesip}
Let $\{a_i\}_{i=1}^k$ and $\{b_i\}_{i=1}^l$ belong to $\mathcal{P}$, then
\begin{eqnarray}
p(x) \ge 0 \quad &\forall x \in \mathbb{R}^n: a_i(x)=0, \, \forall i=1,2,...,k & \nonumber \\
& \text{and} \quad b_j(x) \ge 0, \, \forall j=1,2,...,l&
\end{eqnarray}
is satisfied, if the following holds
\begin{eqnarray} \label{chesieq}
&\exists r_1,r_2,\ldots,r_k \in P[x] \quad \text{and} \quad \exists s_0,s_1,\ldots,s_l \in \Sigma[x] & \nonumber \\
&p = \sum_{i=1}^k r_i a_i +\sum_{i=1}^l s_i b_i +s_0&
\end{eqnarray}
\end{prop}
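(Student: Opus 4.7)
The plan is to prove sufficiency by direct substitution: fix an arbitrary point in the constraint set and show each summand in the claimed decomposition is non-negative at that point. The hypothesis is one-directional (``if'' only), so I do \textbf{not} need to invoke Putinar's theorem or any archimedean condition; only the defining non-negativity properties of $\Sigma[x]$ are required.

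First I would let $x^* \in \mathbb{R}^n$ be any point satisfying $a_i(x^*) = 0$ for all $i \in \{1,\dots,k\}$ and $b_j(x^*) \ge 0$ for all $j \in \{1,\dots,l\}$. Substituting $x^*$ into the assumed identity
\begin{equation*}
p(x) = \sum_{i=1}^k r_i(x)\, a_i(x) + \sum_{j=1}^l s_j(x)\, b_j(x) + s_0(x)
\end{equation*}
yields
\begin{equation*}
p(x^*) = \sum_{i=1}^k r_i(x^*)\, a_i(x^*) + \sum_{j=1}^l s_j(x^*)\, b_j(x^*) + s_0(x^*).
\end{equation*}

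Next I would handle each term. The first sum vanishes identically because $a_i(x^*) = 0$ for every $i$, so regardless of the values of the $r_i \in P[x]$ (which are not required to be non-negative), each product $r_i(x^*) a_i(x^*)$ is zero. For the remaining terms I invoke the definition of $\Sigma[x]$: any $s \in \Sigma[x]$ admits a sum-of-squares decomposition $s = \sum_\ell q_\ell^2$, hence $s(x^*) \ge 0$ pointwise. Therefore $s_0(x^*) \ge 0$, and for each $j$ the product $s_j(x^*)\, b_j(x^*)$ is the product of two non-negative reals (the former by the SOS property, the latter by assumption on $x^*$), hence non-negative.

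Summing three non-negative contributions gives $p(x^*) \ge 0$. Since $x^*$ was an arbitrary element of the constraint set $\{x : a_i(x)=0,\; b_j(x)\ge 0\}$, the desired non-negativity follows. There is essentially no technical obstacle; the only subtlety worth noting is that the $r_i$ need only be polynomials (not SOS), since they multiply the equality constraints $a_i$ and are annihilated on the feasible set---this asymmetry between the equality and inequality multipliers is exactly what Positivstellensatz-type decompositions exploit.
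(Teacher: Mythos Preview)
Your argument is correct: the sufficiency direction is indeed the trivial one, and your pointwise evaluation---killing the equality terms and bounding the inequality terms below by zero via the SOS property---is exactly the standard verification. The paper itself does not supply a proof of this proposition; it simply states the result and attributes it to \cite{chesi2010lmi}, so there is no ``paper's own proof'' to compare against. Your write-up would serve as a self-contained justification where the paper has none.
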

\begin{prop} \label{spos}
The multivariable polynomial $p(x)$ is strictly positive ($p(x)>0 \quad \forall x \in \mathbb{R}^n$), if there exists a $\lambda > 0$ such that
\begin{equation}
\big( p(x) - \lambda \big) \in \Sigma[x].
\end{equation}
\end{prop}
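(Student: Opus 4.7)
The plan is to exploit the defining property of sum-of-squares polynomials directly, since the statement is essentially a one-line consequence of that definition together with the positivity of $\lambda$. First, I would unpack the hypothesis: assuming $p(x)-\lambda \in \Sigma[x]$, by definition there exist polynomials $q_1,\dots,q_k \in \mathcal{R}[x]$ such that
\begin{equation*}
p(x) - \lambda = \sum_{i=1}^{k} q_i(x)^2.
\end{equation*}
Since a real square $q_i(x)^2$ is non-negative for every $x \in \mathbb{R}^n$, the right-hand side is non-negative pointwise, yielding $p(x) - \lambda \ge 0$ for all $x \in \mathbb{R}^n$, i.e., $p(x) \ge \lambda$.

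Next, I would combine this with the assumption $\lambda > 0$ to conclude
\begin{equation*}
p(x) \;\ge\; \lambda \;>\; 0, \qquad \forall x \in \mathbb{R}^n,
\end{equation*}
which is the claimed strict positivity. No Positivstellensatz or quadratic-module machinery is needed here; the statement is precisely a sufficient (not necessary) condition, which is why the argument reduces to monotonicity of the shift by $\lambda$.

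There is no real obstacle: the only subtlety worth flagging is that one must not confuse this sufficient direction with the converse. The converse fails in general because positive polynomials need not admit an SOS decomposition (as noted in the surrounding text citing Parrilo's work), and even positive polynomials that do admit SOS decompositions need not have SOS decompositions with a strictly positive additive margin $\lambda$. Thus the proof is one-directional and entirely elementary, and I would present it in two or three sentences in the final write-up.
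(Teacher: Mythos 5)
Your proof is correct and is exactly the standard elementary argument: writing $p(x)-\lambda=\sum_i q_i(x)^2\ge 0$ pointwise gives $p(x)\ge\lambda>0$ for all $x\in\mathbb{R}^n$. The paper states this proposition without proof, treating it as a routine fact, so your two-line justification (including the correct caution that the condition is only sufficient) is entirely in line with what the paper implicitly relies on; no Positivstellensatz machinery is needed.
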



\bibliography{references}
\bibliographystyle{IEEEtran}

\end{document}